\documentclass[draft, onecolumn, 11pt]{IEEEtran}

\usepackage[utf8]{inputenc} 
\usepackage[T1]{fontenc}
\usepackage{lmodern}
\usepackage{cite}
\usepackage[cmex10]{amsmath} 
\usepackage{amssymb,amsthm}
\usepackage{combelow}
\usepackage[final, hidelinks]{hyperref}
\usepackage{authblk}

\newtheorem{theorem}{Theorem}[]
\newtheorem{definition}[theorem]{Definition}

\newtheorem{lemma}[theorem]{Lemma}
\newtheorem{proposition}[theorem]{Proposition}
\newtheorem{corollary}[theorem]{Corollary}
\newtheorem{remark}[theorem]{Remark}

\newcommand{\F}{\mathbb{F}}

\begin{document}

\title{Algebraic Geometry Codes Approach the Half-Singleton Bound with Constant Field Size} 

\author{Neehar Verma,\thanks{This work was supported by the MSCA Doctoral Networks 2021, HORIZON-MSCA-2021-DN-01 (ENCODE, grant \#101072316).} 
\thanks{This research project was initiated while C. Hollanti was visiting at the Simons Institute for the Theory of Computing at the University of California, Berkeley. }\thanks{R. Tajeddine would like to thank the Aalto Science Institute and the Department of Mathematics and Systems Analysis at Aalto University for the funding and hospitality toward her research visits.}
Camilla Hollanti,  and 
Razane Tajeddine
}

\affil{Aalto University, Finland}
\maketitle

\begin{abstract}
We study linear codes for insertion and deletion (insdel) errors through the lens of evaluation codes. We develop a general framework for analyzing random puncturings of evaluation codes, where the edit distance is controlled by only the size of the evaluation domain and the maximum number of zeros of a nonzero function in the underlying function space. Our proof generalizes the results of \cite{con2024randomRS}, and simultaneously simplifies their arguments by avoiding an in-depth analysis of longest common subsequences. We demonstrate the applicability of our core theorem by instantiating it with random puncturings of Reed--Muller codes. We then recover the result that random Reed--Solomon codes approach the half-Singleton bound over linear-sized fields while also improving the dependence on the additive gap $\varepsilon$ from $2^{O(1/\varepsilon^2)}$ to $2^{O(1/\varepsilon)}$. Finally, by applying the framework to algebraic geometry codes arising from asymptotically good towers of function fields, we show that there exist randomized families of structured linear codes over constant-sized fields that approach the half-Singleton bound. 
\end{abstract}

\section{Introduction}
Error-correcting codes are imperative in ensuring reliable communication and data storage. Classically, errors are typically modeled as erasures or substitutions which are measured using the Hamming distance. This setting has received vast attention ever since the seminal works by Shannon \cite{shannon1948mathematical} and Hamming \cite{hamming1950error} and is by now very well understood. Today, many algebraic and probabilistic constructions of codes are known for this classical model, achieving strong rate-distance tradeoffs together with efficient encoding and decoding algorithms. 

Insertion and deletion errors, commonly referred to as \emph{insdel} errors and measured by the edit distance, are substantially more tricky to deal with as they destroy the alignment between transmitted and received coordinates and may even alter the length of the transmitted message. The study of codes that can correct against such errors was initiated in the work of Levenshtein \cite{levenshtein1966binary} who studied the optimality of the binary codes of Varshamov and Tenengolts \cite{varshamov1965codes} in the context of insdel errors. After a substantive chronological gap, the first instance of positive-rate binary codes capable of correcting an asymptotically positive fraction of insdel errors was constructed in \cite{schulman1999asymptotically}. 

Recently there has been renewed interest in this error model due to the arrival of reliable DNA synthesis and sequencing technologies --- which are inherently susceptible to insertions and deletions --- and the consequent development of highly space-efficient DNA-based archival data storage techniques \cite{goldman2013towards}. 
For a recent review on DNA-based data storage systems and associated coding theoretic problems the reader may refer to \cite{milenkovic2024review}, and for a complementary information-theoretic perspective they may refer to \cite{viterbo2023review}.

In the case of binary codes, there has been staggering progress \cite{levenshtein2002bounds, guruswami2021explicit, sima2020optimal, sima2020systematic, cheng2018deterministic, haeupler2019optimal, guruswami2016efficiently, guruswami2022zero, yasunaga2024improved, qiu2026binary} in better understanding the capabilities, limitations, and rate-distance tradeoff for the insdel model. For larger, yet constant-sized alphabets the most insightful result is that of Haeupler and Shahrasbi \cite{haeupler2021synchronization-org} who defined synchronization strings and showed that they yield explicit and efficiently decodable codes over constant-sized alphabets that correct a $\delta$-fraction of insdel errors while achieving rate $1-\delta-\varepsilon$, essentially attaining optimality with respect to the \emph{Singleton bound}. Here, and for the remainder of this paper, $\varepsilon$ represents the optimality gap, i.e., the additive gap from achieving the half-Singleton bound with equality.   

The reader may notice that up until now all the cited works on insdel codes are concerned with non-linear codes. It is not a coincidence that these works have drifted in this direction, but a consequence of the restriction that linearity imposes on the maximum rate achievable by a code designed for insdel errors. In particular, it was shown in \cite{abdel2007linear} that a linear code capable of correcting a single deletion can have a rate at most $1/2$, far from the optimal rate of $1$ achievable by non-linear codes. Despite this shortcoming, linearity is yet desirable as it gives us the benefit of compact representation and encoding using a generator matrix, and it underlies many important algebraic code families, including Reed--Solomon and algebraic geometry codes. We will use the standard notation $n$ and $k$ to represent the length and dimension of a linear code, while $q$ will be a prime power representing the size of the underlying finite field.

So, to address the question of the existence of good linear codes for insdel errors, Cheng, Guruswami, Haeupler and Li \cite{cheng2020efficient} established two upper bounds on the rate of a linear code. The alphabet-dependent \emph{half-Plotkin bound} which says that any linear code over $\F_q$ that can correct a $\delta$ fraction of insdel errors has a rate at most $\frac{1}{2}\left(1 - \frac{q}{q-1}\delta\right) + o(1)$, and the alphabet-independent \emph{half-Singleton bound} which says that any linear code that can correct a $\delta$ fraction of insdel errors has a rate at most $\frac{1-\delta}{2}+o(1)$. In other words, an $[n,k]$ linear code can correct at most $n-2k+1$ insdel errors. 
These results apply universally, and yet beg the question of whether structured families of linear codes are capable of achieving the half-Singleton bound. In particular, can classical algebraic code families achieve comparable guarantees against insdel errors? Obtaining such structured linear codes over constant-sized fields has been identified as an important goal in the literature \cite{cheng2023linear}.

\subsection{Reed--Solomon codes}
A natural structured family to study in this setting is Reed--Solomon (RS) codes. For an ordered set of distinct evaluation points $\alpha_1,\dots,\alpha_n \in \F_q$, an $[n,k]_q$ Reed--Solomon code is defined as
$$\mathrm{RS}_{n,k}(\alpha_1,\dots,\alpha_n)
= \lbrace (f(\alpha_1),\dots,f(\alpha_n)) \mid f \in \F_q[X],\ \deg(f)< k \rbrace.$$
RS codes are maximum distance separable (MDS) in the Hamming metric and admit efficient algebraic encoding and decoding algorithms. Their strong algebraic structure makes it natural to ask whether they can also achieve good guarantees against insdel errors. The study of the performance of RS codes for deletions was initiated in \cite{safavi2002traitor} and \cite{wang2004deletion}. Several works have provided constructions of two-dimensional RS codes that are optimal with respect to the half-Singleton bound, i.e., they are capable of correcting $n-3$ insdel errors \cite{duc2019explicit, liu20212}. In \cite{con2023reed} it is shown that  for an $[n,2]_q$ RS code to be optimal the field size must be $q = \Omega(n^3)$ and \cite{con2023optimal} provides an explicit construction with $q = O(n^3)$, thereby completely characterizing RS codes that achieve the half-Singleton bound. The explicit construction is made even more appealing by the linear time  deletion-only decoder of \cite{singhvi2026decoder} and by the average linear time permutation insdel decoder of \cite{zhang2026perminsdel}. On the other hand, a complete characterization of two-dimensional RS codes that cannot correct even one insdel error is given in \cite{beelen2025fulllengthratehalf}.

The two-dimensional setting is therefore very well understood, but in the higher-dimensional regime much less is known. Con, Shpilka, and Tamo showed that RS codes can attain the half-Singleton bound exactly \cite{con2022explicit}, although the size of the required field is exponentially large in the constant-rate regime. To address the large field size, Con, Guo, Li, and Zhang showed that random RS codes approach the half-Singleton bound over linear-sized fields \cite{con2024randomRS}. In particular, for a sufficiently large absolute constant $c$, they show that
$$q\geq n+2^{c/\varepsilon^2}k$$
is sufficient for a random $[n,k]_q$ RS code to correct
$$(1-\varepsilon)n-2k+1$$
insdel errors with high probability.

Their proof is inspired by recent progress on list decoding of random RS codes \cite{brakensiek2023generic, GZ23, alrabiah2023randomly}, where the main technical problem is reduced to showing that particular structured matrices depending on randomly chosen evaluation points --- which capture the essence of list-decoding capacity --- are non-singular. Con, Guo, Li, and Zhang \cite{con2024randomRS} followed a similar approach for the insdel setting. They derive a sufficient algebraic condition in which the relevant matrices depend on pairs of possible common subsequences, and their proof requires a careful structural analysis of the longest common subsequence (LCS) together with a specially designed exposure procedure for the evaluation points. An efficient and general insdel decoder for $k$-dimensional RS codes is given in \cite{banerjee2025decoding} by leveraging list recovery. However, this decoder does not operate efficiently close to the half-Singleton bound and is only capable of correcting  $t$ insdel errors when $tk = O(n)$.
So, despite the strong distance guarantees, efficient decoding up to the half-Singleton bound remains open for general dimensions.

\subsection{Our results}
In this work, we show that the mechanism underlying the result of Con et al.\ in \cite{con2024randomRS} can be made substantially simpler and, more importantly, applies to a much broader class of evaluation codes.

Let $\mathcal{X}$ be a finite evaluation domain and let $\mathcal{F}$ be a $k$-dimensional vector space of functions from $\mathcal{X} \to \F_q$. Suppose that every nonzero function in $\mathcal{F}$ vanishes on at most $Z$ points of $\mathcal{X}$. Choosing distinct points $P_1, \dots, P_n \in \mathcal{X}$ uniformly at random gives the evaluation code
$$C_{\mathcal{F}}(P_1,\dots,P_n) =
\lbrace (f(P_1),\dots,f(P_n)) \mid f \in \mathcal{F} \rbrace.$$
We give a general edit distance guarantee for such codes using only the zero parameter $Z$ and the size of the evaluation domain.

The proof is based on a simple row-exposure argument. A long common subsequence between two distinct codewords gives rise to a matrix that must fail to have full column rank. When its rows are exposed sequentially, each new row contains a fresh evaluation point. Conditioned on all previously exposed points, failure of this row to increase the rank forces the fresh evaluation point to be a zero of a nonzero function in $\mathcal{F}$. Hence, each such failure occurs with probability at most
$$\frac{Z}{|\mathcal{X}|-n+1}.$$
A rank-deficient matrix requires many such failures, making the bad event exponentially unlikely. A union bound over the possible pairs of subsequences then gives the desired insdel error correction guarantee. Thus, after the standard reduction from edit distance to LCS, the proof requires only a fresh evaluation point and a bound on the number of zeros of a nonzero function, thereby avoiding the more involved LCS analysis used in the previous RS proof.

First, we obtain new consequences for Reed--Muller (RM) codes. In particular, for first-order binary RM codes, when $m=o(n)$ and $n=o(2^m)$, we show that a random ordered length-$n$ puncturing of $\mathrm{RM}_2(m,1)$ corrects a constant fraction of insdel errors; more precisely, it corrects any fixed $\delta<0.0376$ fraction of errors with probability $1-2^{-\Omega(n)}$. This complements the recent work of Qiu et al. \cite{qiu2026binary}, who obtain insdel error correction guarantees for specially chosen linear subcodes of binary first-order RM and simplex codes and subsequently puncture these subcodes to improve their rate. In contrast, our result requires no restriction to a subcode and gives a high-probability guarantee for random puncturings of the full first-order RM code.

Next, for RS codes we simply have $Z=k-1$. Our general theorem therefore immediately recovers the linear field size result of Con et al., while also improving its dependence on $\varepsilon$. We show that
$$q\geq n+2^{4/\varepsilon}(k-1)$$
is sufficient for a random RS code, with probability at least $1-2^{-n}$, to correct
$$
n-2k+1-\lfloor\varepsilon n\rfloor
$$
insdel errors. Hence, the dependence on the additive gap $\varepsilon$ from the half-Singleton bound improves from $2^{O(1/\varepsilon^2)}$ to $2^{O(1/\varepsilon)}$.

Finally, the main benefit of our generalization is that it applies directly to algebraic geometry (AG) codes. Let $\mathcal X$ be an algebraic curve over $\mathbb F_q$ and let $G$ be a divisor. Every nonzero function in the Riemann--Roch space $\mathcal L(G)$ has at most $\deg(G)$ zeros away from the support of $G$. Therefore, our general theorem applies immediately with $Z =\deg(G)$. No new LCS analysis or AG-specific probabilistic argument is required. 

This extension allows us to overcome a basic limitation of RS codes. Over $\F_q$, RS codes have only $q$ possible evaluation points, forcing the field size to grow with the block length. In contrast, algebraic curves in asymptotically good towers can provide an unbounded number of rational evaluation points over a fixed field. Using such towers, we obtain randomized families of linear AG codes over constant-sized fields that approach the half-Singleton bound. More precisely, for every $\delta,\varepsilon>0$, there exists a finite field $\F_q$, whose size is independent of the block length, and a family of randomly punctured AG codes over $\F_q$ that, with high probability, can correct a $\delta$-fraction of insdel errors while achieving
$$R\geq \frac{1-\delta}{2}-\varepsilon.$$

General random linear codes with these parameters were already known to exist. Our result shows that the same near-optimal tradeoff is achieved by a structured algebraic family over a constant-sized field. Thus, in the randomized algebraic setting, we achieve the parameter goal of constructing structured linear codes over constant-sized fields that approach the half-Singleton bound.

There is also a close parallel with recent progress in list decoding. Results on randomly punctured RS codes were subsequently extended to AG codes in order to obtain constant field size list-decoding results \cite{brakensiek2024ag}. Our work is inspired by this progression, but does not use these list-decoding results. Rather, our general framework shows directly that the same RS to AG transition also occurs for insertion and deletion errors.

To summarize, we give a simple general framework for analyzing random evaluation codes against insdel errors. For RM codes, the framework gives new insdel error correction guarantees for random puncturings, including a constant-fraction guarantee for first-order binary RM codes. This complements the recent work of Qiu et al. \cite{qiu2026binary} on insdel-robust subcodes of binary first-order Reed--Muller codes, whereas our result gives a constant-fraction guarantee for random puncturings of the full first-order Reed--Muller code. For RS codes, the framework gives a substantially shorter proof of the result of Con et al.\cite{con2024randomRS} and improves the dependence on $\varepsilon$ from $2^{O(1/\varepsilon^2)}$ to $2^{O(1/\varepsilon)}$. More importantly, the same theorem applies directly to AG codes and yields randomized families of structured linear codes over constant-sized fields that approach the half-Singleton bound.

\section{Notation and preliminaries}
In the entirety of this work we use $\F_q$ to denote the finite field with $q$ elements, and we write $[n] = \{1, \dots, n\}$. 
An $[n,k,d]_q$ linear code $\mathcal{C}$ is defined as a subspace of $\F_q^n$, parameterized by its length $n$, dimension $k$, and minimum Hamming distance $d$. These three parameters of a code are related by the well-known Singleton bound, $d \leq n - k + 1$. Codes meeting this bound are called maximum distance separable (MDS) codes. We may simply refer to an $[n,k]$ code, when the field size and minimum distance are either not relevant or immediate from the context.

\subsection{Preliminaries on insertion/deletion errors} 

We begin by recalling some basic notions concerning insdel errors. The edit distance serves as the fundamental quantity for measuring the distance between two strings. 

\begin{definition}[Edit distance]
	Given two strings $s$ and $s'$, the edit distance $d_{\mathrm{edit}}(s,s')$ is the minimum number of insertions/deletions required to convert $s$ to $s'$ (or vice-versa). \footnote{The edit distance is often denoted by $\mathrm{ED}(\cdot, \cdot)$ or $d_I(\cdot, \cdot)$ in the literature.}  
\end{definition}

The longest common subsequence (LCS) of two strings, which we now define, is closely linked to their edit distance.

\begin{definition}[Longest common subsequence]
	Given two strings $s$ and $s'$, the longest common subsequence of the two strings is a subsequence of both $s$ and $s'$, of maximal length.
	\\$\mathrm{LCS}(s,s')$ denotes the length of this subsequence.
\end{definition}

The following proposition establishes this link between the edit distance and the LCS, allowing us to move interchangeably between the two.

\begin{proposition}
	\label{ED-LCS}
	Suppose $s$ and $s'$ are two strings then $d_{\mathrm{edit}}(s,s') = |s| + |s'| - 2\mathrm{LCS}(s,s')$.	
\end{proposition}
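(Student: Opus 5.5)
We prove the identity by establishing the two inequalities $d_{\mathrm{edit}}(s,s') \le |s|+|s'|-2\mathrm{LCS}(s,s')$ and $d_{\mathrm{edit}}(s,s') \ge |s|+|s'|-2\mathrm{LCS}(s,s')$ separately. Throughout, $\ell = \mathrm{LCS}(s,s')$ and $u$ denotes a longest common subsequence, so $|u| = \ell$.

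\emph{Upper bound.} Since $u$ is a subsequence of $s$, we can delete the $|s|-\ell$ symbols of $s$ that are not in the chosen occurrence of $u$, which turns $s$ into $u$. Since $u$ is also a subsequence of $s'$, we can then insert the $|s'|-\ell$ symbols of $s'$ that are missing from $u$, each at its correct position, which turns $u$ into $s'$. This uses $|s|+|s'|-2\ell$ operations in total, giving the upper bound.

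\emph{Lower bound.} Take an optimal sequence of $t = d_{\mathrm{edit}}(s,s')$ insertions and deletions transforming $s$ into $s'$, with $a$ deletions and $b$ insertions. Counting lengths gives $|s'| = |s| - a + b$. Now track the individual symbols of $s$ through the process, treating an inserted symbol as new and never identifying it with an original one. The symbols of $s$ that are never deleted survive into $s'$ in the same relative order, because insertions and deletions do not reorder the surviving symbols. Hence these survivors form a common subsequence of $s$ and $s'$, and there are at least $|s|-a$ of them, so $\ell \ge |s| - a$. Likewise every symbol of $s'$ is either an original survivor or one of at most $b$ inserted symbols, so $\ell \ge |s'| - b$. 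Adding the two inequalities gives $2\ell \ge |s|+|s'|-(a+b) = |s|+|s'|-t$, which is exactly $t \ge |s|+|s'|-2\ell$.

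The only delicate point is the bookkeeping in the lower bound. An inserted symbol might later be deleted, or might coincidentally equal a deleted one, and we must make sure such cases do not break the argument. Tracking symbol identities (original positions) rather than symbol values removes this issue. The survivors' order is preserved by every operation, which is all the argument needs.
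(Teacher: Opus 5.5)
Your proof is correct and follows the same route as the paper: the upper bound comes from $|s|-\ell$ deletions followed by $|s'|-\ell$ insertions, and the lower bound comes from a common subsequence with $\ell \ge |s|-a$ and $\ell \ge |s'|-b$, after which you add the two inequalities. Your tracking of the surviving original symbols is slightly more careful than the paper's argument, which implicitly assumes all deletions are performed before any insertion.
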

\begin{proof} 
	We can transform $s$ to $s'$ by performing $|s| - \mathrm{LCS}(s,s')$ deletions followed by $|s'| - \mathrm{LCS}(s,s')$ insertions. Therefore, we have an upper bound $d_{\mathrm{edit}}(s,s') \leq |s| + |s'| - 2\mathrm{LCS}(s,s')$.
	Suppose $d_{\mathrm{edit}}(s,s') = D + I$ where some $D$ deletions and $I$ insertions transform $s$ to $s'$. Let $\ell$ be the subsequence of $s$ after $D$ deletions. Then $\ell$ must be a common subsequence with $s'$, thus $\mathrm{LCS(s,s')} \geq |s| - D$. 
	Now, $\ell$ with $I$ insertions gives us $s'$, hence $\mathrm{LCS}(s,s') \geq |s'| - I$. Combining the two we get $2\mathrm{LCS}(s,s') \geq |s| + |s'| - D - I = |s| + |s'| - d_{\mathrm{edit}}(s,s')$. This gives us the lower bound $d_{\mathrm{edit}}(s,s') \geq |s| + |s'| - 2\mathrm{LCS}(s,s')$. 
	The upper and lower bounds give us equality.
\end{proof}

Next, we recall the consequence that linearity imposes on the rate of a code that can correct from a single insertion or deletion. This consequence was proved in \cite{abdel2007linear}, but we give the proof here for completeness.

\begin{lemma}
	An $[n,k]$ linear code $C$ with rate $\frac{k}{n} > \frac{1}{2}$ cannot correct a single insertion/deletion.
\end{lemma}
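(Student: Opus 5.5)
The plan is to show that when $k > n/2$ there are two distinct codewords $c \neq c'$ with $\mathrm{LCS}(c,c') \geq n-1$. By Proposition~\ref{ED-LCS} this gives $d_{\mathrm{edit}}(c,c') \leq 2$. A single deletion applied to $c$ and a single deletion applied to $c'$ can then produce the same string of length $n-1$, so the code cannot correct even one deletion; in particular it cannot correct one insertion/deletion.

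The key construction is a nonzero codeword $c$ whose shift agrees with itself, namely $c_{i} = c_{i+1}$ for all $i \in [n-1]$ after a suitable modification. More directly, I would look for distinct codewords $c, c'$ with
\[
(c_2,\dots,c_n) = (c'_1,\dots,c'_{n-1}).
\]
Then deleting the first symbol of $c$ and the last symbol of $c'$ yields the same string, which is the required common subsequence of length $n-1$.

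To find such a pair, consider the linear map
\[
\phi : C \times C \to \F_q^{n-1}, \qquad (c,c') \mapsto (c_2 - c'_1,\, c_3 - c'_2,\, \dots,\, c_n - c'_{n-1}).
\]
Its domain has dimension $2k$ and its target has dimension $n-1$. Since $2k > n > n-1$, the kernel has dimension at least $2k-n+1 \geq 2$.

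It remains to exclude the trivial kernel elements with $c = c'$. Such a pair satisfies $c_{i+1} = c_i$ for all $i$, so $c$ lies in $C \cap \langle \mathbf{1} \rangle$ and spans a subspace of dimension at most $1$. Because the kernel has dimension at least $2$, some kernel element $(c,c')$ has $c \neq c'$. This gives the desired pair.

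The main obstacle is this final step: ensuring that the kernel contains more than the diagonal pairs, where a naive count could be off by one. The dimension count $2k-n+1 \geq 2$ handles it, using $2k \geq n+1$.
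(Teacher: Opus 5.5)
Your proof is correct and follows the paper's argument: the paper likewise solves $xG^{(1)} - yG^{(n)} = 0$ (first coordinate deleted from one codeword, last coordinate from the other), compares the dimensions $2k$ and $n-1$, and excludes the solutions with $x=y$ because they force a constant codeword, which spans at most a one-dimensional space. Working directly on $C\times C$ is a slightly cleaner version of the same count; you can drop the stray sentence about a codeword ``whose shift agrees with itself,'' since the argument never uses it.
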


\begin{proof}
	We want to show that there exist distinct codewords $c,c' \in C$ such that the edit distance $d_{\mathrm{edit}}(c,c') < 3$ or equivalently, by the previous proposition, $\mathrm{LCS}(c,c') \geq n-1$.
	Let $G$ be the generator matrix for an $[n,k]$ linear code with rate $\frac{k}{n} > \frac{1}{2}$.
	Denote by $G^{(i)}$ the matrix $G$ with the $i^{th}$ column removed, $i \in [n]$. Denote by $\langle M \rangle$ the vector space spanned by the rows of a matrix $M$. 
	For any $i,j \in [n]$ we have $$\dim (\langle G^{(i)}\rangle  + \langle G^{(j)}\rangle) \leq \min(n-1,2k) < 2k - 1,$$ since $\frac{k}{n} > \frac{1}{2}$. This implies that there exist nonzero vectors $x$ and $y$ such that $xG^{(i)} - yG^{(j)} = 0$. 
	If we can show that $x \neq y$ for some pair $i,j$ with $i \neq j$, then we will have found two distinct codewords $c = xG$ and $c' = yG$ such that $\mathrm{LCS}(c,c') \geq n-1$.
	Assume that the only nonzero solutions of $xG^{(1)} - yG^{(n)} = 0$ are of the form $x = y$, this implies that $xG$ must be a constant codeword. This tells us that $$\dim (\langle G^{(1)}\rangle  + \langle G^{(n)}\rangle) \geq 2k-1.$$ 
    This is a contradiction and therefore proves the existence of distinct codewords with $\mathrm{LCS}(c,c') \geq n - 1$. 
\end{proof}

A direct consequence of the above lemma gives us the well-established non-asymptotic version of the half-Singleton bound \cite{cheng2020efficient}, which we prove for completeness.

\begin{theorem}[Non-asymptotic half-Singleton bound]
    An $[n,k]$ linear code can correct at most $n-2k + 1$ insdel errors.
\end{theorem}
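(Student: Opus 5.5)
The plan is to reduce to the preceding lemma by puncturing. Let $C$ be an $[n,k]$ linear code, and set $t = n-2k+2$; we may assume $t\ge 1$, since otherwise $n \le 2k-2$ and the lemma already applies to $C$. Note that correcting $t$ insdel errors requires every pair of distinct codewords to have edit distance at least $2t+1$. By Proposition~\ref{ED-LCS}, for equal-length words $c,c'$ this is equivalent to $\mathrm{LCS}(c,c') \le n-t-1$. It therefore suffices to exhibit two distinct codewords $c\neq c'$ with $\mathrm{LCS}(c,c') \ge n-t = 2k-2$. Since $2(2k-2)-2k+2 = 2k-2$, this is the same as exhibiting them with $d_{\mathrm{edit}}(c,c') \le 2t$.

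To find them, I puncture $C$ onto its first $m = 2k-1$ coordinates; if $n < 2k-1$ we are in the trivial case above. Call the result $C'$. The puncturing map may have a kernel, so I first handle that case. If some nonzero codeword $c$ vanishes on the first $2k-1$ coordinates, then $c$ and $0$ agree on $2k-1 \ge 2k-2$ positions, so $\mathrm{LCS}(c,0) \ge 2k-2$ and we are done. Otherwise $C'$ is a $[2k-1,k]$ code, with rate $k/(2k-1) > 1/2$. The lemma then gives distinct $x',y' \in C'$ with $\mathrm{LCS}(x',y') \ge 2k-2$, i.e.\ a single deletion from each produces a common string. Let $x,y \in C$ be their unique preimages; these are distinct. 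A common subsequence of the prefixes is a common subsequence of the full words, so $\mathrm{LCS}(x,y) \ge 2k-2 = n-t$, and hence $d_{\mathrm{edit}}(x,y) \le 2n - 2(n-t) = 2t$.

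The main obstacle is a subtlety in the lemma's proof: it shows $\dim(\langle G^{(i)}\rangle + \langle G^{(j)}\rangle) < 2k$, but this uses the dimension of the punctured row spaces, which needs $G^{(i)}$ to still have rank $k$. I would handle this the same way as above. If deleting a column drops the rank, some nonzero codeword is supported on a single coordinate, and its LCS with $0$ is already $\ge m-1$.

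Finally, I would note the translation into the required statement. A pair of distinct codewords at edit distance $\le 2(n-2k+2)$ means $n-2k+2$ insdel errors cannot always be corrected: one can reach a common word halfway between the two codewords, and the decoder cannot tell which was sent. Hence the code corrects at most $n-2k+1$ insdel errors.
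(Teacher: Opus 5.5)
Your proof is correct and is essentially the paper's argument: both puncture down to $2k-1$ coordinates (the paper deletes $t-1$ positions from a code assumed to correct $t$ errors) and then invoke the lemma that a linear code of rate above $1/2$ cannot correct a single insdel error. The differences are only in presentation: you argue directly, lifting a long common subsequence of the prefixes to the full codewords, and you treat a non-injective puncturing as a separate case, whereas the paper argues by contradiction and gets injectivity implicitly because the punctured images remain at edit distance at least $3$.
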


\begin{proof}
    Suppose the $[n,k]$ linear code $C$ can correct $t$ insdel errors, then for any pair of distinct codewords $c_1,c_2 \in C$ we must have that $d_{\mathrm{edit}}(c_1,c_2) \geq 2t + 1$. Puncture the code $C$ at any $t - 1$ positions to give $C'$. Then $d_{\mathrm{edit}}(c_1',c_2') \geq 2t + 1 - 2(t-1) = 3$ for any pair of distinct codewords $c_1',c_2' \in C'$. This implies that $C'$ can correct one insertion/deletion. 
    The previous lemma implies that $\frac{k}{n-t+1} \leq \frac{1}{2}$, and thus $t \leq n - 2k + 1$.
\end{proof}

For the remainder of the paper it is more convenient to formulate the insdel error correction  guarantees in terms of the longest common subsequence rather than the edit distance. The following standard lemma gives us the equivalent condition to shift our perspective to the LCS.

\begin{lemma}
    \label{LCSrequirement}
    A code $C$ can correct a $\delta$-fraction, that is $\delta n$ insdel errors if and only if $\mathrm{LCS}(c,c') \leq n - \delta n - 1$ for any distinct $c,c' \in C$.
\end{lemma}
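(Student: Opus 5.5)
The plan is to reduce both directions to the standard fact that a code corrects $t$ insdel errors if and only if every pair of distinct codewords has edit distance at least $2t+1$, and then convert this to LCS using Proposition~\ref{ED-LCS}. Since all codewords of $C$ have the same length $n$, Proposition~\ref{ED-LCS} gives $d_{\mathrm{edit}}(c,c') = 2n - 2\,\mathrm{LCS}(c,c')$ for all $c,c'\in C$. Writing $t=\delta n$, the condition $d_{\mathrm{edit}}(c,c')\ge 2t+1$ becomes $2n-2\,\mathrm{LCS}(c,c')\ge 2t+1$. The left-hand side is even, so this is equivalent to $2n-2\,\mathrm{LCS}(c,c')\ge 2t+2$, that is, $\mathrm{LCS}(c,c')\le n-t-1$, which is exactly the stated criterion.

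It remains to justify the edit-distance characterization. For sufficiency, suppose a received string $y$ arises from $c$ by at most $t$ insertions and deletions. Then $d_{\mathrm{edit}}(c,y)\le t$. If some other codewords $c'\ne c$ also satisfied $d_{\mathrm{edit}}(c',y)\le t$, the triangle inequality for edit distance would give $d_{\mathrm{edit}}(c,c')\le 2t$, a contradiction. The triangle inequality holds because edit distance is a shortest-path metric: concatenating two edit sequences gives an edit sequence. Hence the decoder that returns the unique codeword within edit distance $t$ of $y$ is well defined and correct.

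For necessity, suppose $d_{\mathrm{edit}}(c,c')\le 2t$ for some distinct codewords $c,c'$. Fix a minimal edit sequence from $c$ to $c'$; by Proposition~\ref{ED-LCS} we may take it to consist of $n-L$ deletions followed by $n-L$ insertions, where $L=\mathrm{LCS}(c,c')$ and $n-L\le t$. Then there is a common LCS string $\ell$ obtained from $c$ by $n-L\le t$ deletions and from $c'$ by $n-L\le t$ deletions. The received word $\ell$ could therefore have come from either $c$ or $c'$ with at most $t$ errors, so no decoder can correct $t$ errors. Parity is not an issue here, since each side uses exactly $n-L$ operations.

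The only delicate point I expect is this parity and rounding step, including the implicit assumption that $\delta n$ is an integer (otherwise one reads $\delta n$ as $\lfloor \delta n\rfloor$ throughout). The integrality of LCS is what turns the strict inequality $\mathrm{LCS}<n-t-\tfrac12$ into $\mathrm{LCS}\le n-t-1$. With the common-length simplification of Proposition~\ref{ED-LCS}, the rest is routine.
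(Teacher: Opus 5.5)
Your proposal is correct and follows the same route as the paper: it uses the characterization that $t$ insdel errors are correctable iff all distinct codewords satisfy $d_{\mathrm{edit}}(c,c')\ge 2t+1$, then converts this to an LCS bound via Proposition~\ref{ED-LCS} with $|c|=|c'|=n$. The paper takes that characterization as given, whereas you prove it (triangle inequality for sufficiency, the common subsequence as an ambiguous received word for necessity), and you also make explicit the parity/integrality step turning $\mathrm{LCS}\le n-\delta n-\tfrac12$ into $\mathrm{LCS}\le n-\delta n-1$, which the paper leaves implicit.
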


\begin{proof}
    A code can correct $\delta n$ insdel errors if and only if $\delta n \leq \frac{d_{\mathrm{edit}}(c,c')-1}{2}$ for any distinct $c,c' \in C$. By Proposition \ref{ED-LCS} we directly get the requirement $\mathrm{LCS}(c,c') \leq n - \delta n - 1$.
\end{proof}

\begin{definition}
    The tuple $I = (I_1, \dots, I_\ell) \in [n]^\ell$ is called an increasing subsequence of length $\ell$ if $I_1 < \cdots < I_\ell$.
\end{definition}

\subsection{Algebraic preliminaries}\label{AGprelim} 
In this section we briefly recall some necessary concepts from algebraic geometry. A complete, in-depth review can be found in \cite{hoholdt1998ag}. Denote by $\bar\F_q$ the algebraic closure of $\F_q$. 
We denote by $\bar\F_q[X_1, \dots, X_n]$ the multivariate polynomial ring and use $\bar\F_q(X_1, \dots, X_n)$ to denote the field of rational functions in $X_1, \dots, X_n$ over $\bar\F_q$. A projective variety $\mathcal{X}$ is the set of zeros in the projective space $\mathbb{P}\bar\F_q^{n-1}$ cut out by a homogeneous prime ideal $\mathcal{I}$ in $\bar\F_q[X_1, \dots, X_n]$. We denote by $\bar\F_q[\mathcal{X}] = \bar\F_q[X_1, \dots, X_n]/\mathcal{I}$ the homogeneous coordinate ring of $\mathcal{X}$. This ring is an integral domain, and the function field $\bar\F_q(\mathcal{X})$ of $\mathcal{X}$ is the field consisting of ratios $f/g$ where $f, g \in \bar\F_q[\mathcal{X}]$ are homogeneous polynomials of the same degree and $g \neq 0$. The transcendence degree of $\bar\F_q(\mathcal{X})$ over $\bar\F_q$ is the dimension of $\mathcal{X}$.

Let $\mathcal{X}$ be a curve defined over $\F_q$, that is, the defining ideal $\mathcal{I}$ can be generated by polynomials with coefficients in $\F_q$. We then write $\F_q(\mathcal{X})$ for the corresponding subfield of $\bar\F_q(\mathcal{X})$ defined over $\F_q$, and work with it throughout. Points on $\mathcal{X}$ with all their coordinates in $\F_q$ are called rational points.

A place $P$ of $\F_q(\mathcal{X})$ is the maximal ideal of a discrete valuation ring $\mathcal{O}_P \subseteq \F_q(\mathcal{X})$. 
A place is called a rational place of $\F_q(\mathcal{X})$ if $\deg(P) := [\mathcal{O}_P/P : \F_q] = 1$.

The rational places of $\F_q(\mathcal{X})$ are in one-to-one correspondence with the rational points of $\mathcal{X}$. We write $\mathcal{P}_q(\mathcal{X})$ for the set of all rational places. For a given place $P$, $v_P$ denotes the corresponding discrete valuation map. For a function $f \in \F_q(\mathcal{X})$ and place $P \in \mathcal{P}_q(\mathcal{X})$, we say $P$ is a zero of $f$ if $v_P(f) > 0$, and $P$ is a pole of $f$ if $v_P(f) < 0$. The discrete valuation $v_P$ measures the multiplicity of a zero or pole of $f$ at $P$.

A divisor of $\mathcal{X}$ is a formal sum $G = \sum_P n_P P$ of places, with $n_P = 0$ for all but finitely many $P$. The degree of $G$ is $\deg(G) := \sum_P n_P \deg(P)$, and we write $G \geq 0$ if all $n_P \geq 0$. 
Since every function $f \in \F_q(\mathcal{X})$ has finitely many poles and zeros, we can introduce the zero and pole divisors of $f$ as $(f)_0 := \sum_{v_P(f)>0} v_P(f) P$ and $(f)_\infty := -\sum_{v_P(f)<0} v_P(f) P$ respectively. The principal divisor of $f$ is $(f) := (f)_0 - (f)_\infty$. Principal divisors have degree zero.

For a divisor $G$, the Riemann--Roch space is
$$
\mathcal{L}(G) := \{f \in \F_q(\mathcal{X})\setminus\{0\} \mid (f) + G \geq 0\} \cup \{0\},
$$
a finite-dimensional $\F_q$-vector space of dimension $\ell(G)$.

We now restrict our attention to algebraic curves $\mathcal{X}$, i.e., algebraic varieties of dimension one.
The genus $g$ of $\mathcal{X}$ is the non-negative integer determined by the Riemann--Roch theorem: for any divisor $G$,
$$
\ell(G) = \deg(G) + 1 - g + \ell(W - G),
$$
where $W$ is a canonical divisor, defined as the divisor of any nonzero rational differential form on $\mathcal{X}$, and satisfying $\deg(W) = 2g - 2$. Two consequences of this theorem are:
\begin{itemize}
    \item $\ell(G) \geq \deg(G) + 1 - g$, with equality whenever $\deg(G) \geq 2g - 1$.
    \item A nonzero $f \in \mathcal{L}(G)$ has at most $\deg(G)$ zeros, counted with multiplicity.
\end{itemize}
We are now ready to define algebraic geometry codes. 
\begin{definition}
    Let $P_1, \dots, P_n$ be distinct rational points on $\mathcal{X}$ and let $G$ be a divisor whose support is disjoint from $\{P_1, \dots, P_n\}$.  The associated algebraic geometry code is
$$
C_{\mathcal{L}(G)}(P_1, \dots, P_n) := \{(f(P_1), \dots, f(P_n)) \mid f \in \mathcal{L}(G)\} \subseteq \F_q^n.
$$
\end{definition}
If $\deg(G) < n$, this is an $[n, k, d]$-code with
$$
k = \ell(G) \geq \deg(G) + 1 - g, \quad d \geq n - \deg(G).
$$
Furthermore, if $2g - 2 < \deg(G) < n$ then
$$
k = \ell(G) = \deg(G) + 1 - g, \quad d \geq n - \deg(G).
$$

We next present a sequence of curves, which give rise to an asymptotically good family of codes.
Let $q = p^2$. The Garcia--Stichtenoth (GS) tower \cite{garcia1996asymptotic} is a sequence of function fields $\mathcal{T}_1 \subseteq \mathcal{T}_2 \subseteq \cdots$ over $\F_q$ defined by $\mathcal{T}_1 = \F_q(X_1)$ and, recursively, $\mathcal{T}_m = \mathcal{T}_{m-1}(X_m)$ where $X_m$ satisfies
$$X_m^p + X_m = \frac{X_{m-1}^p}{X_{m-1}^{p-1} + 1}.$$
Let $\mathcal{Y}_m$ denote the projectivized smooth projective curve with function field $\mathcal{T}_m$. There is unique point $P_\infty$ at infinity, and the number of rational points on the affine part of the curve is 
$$|\mathcal{P}_q(\mathcal{Y}_m) | = (p - 1) p^{m},$$
and the genus is 
$$g_m = \begin{cases} (p^{(m+1)/2} - 1)(p^{(m-1)/2} - 1) & \text{if $m$ is odd} \\ (p^{m/2} - 1)^2 & \text{if $m$ is even} \end{cases}.$$
The ratio $|\mathcal{P}_q(\mathcal{Y}_m)|/g_m \to p - 1 = \sqrt{q} - 1$ as $m \to \infty$, so the tower attains the Drinfeld--Vl\u{a}du\cb{t}  bound \cite{vladut1983bound}, giving an asymptotically good family of codes.

\section{Randomly punctured evaluation codes for insertions/deletions}

In this section, we give a general framework for analyzing the edit distance of random puncturings of evaluation codes. This framework applies to any finite-dimensional space of functions in which any nonzero function has a bounded number of zeros on the evaluation domain. To this end, we first give the general definition of an evaluation code which captures several well-known code families like RS codes and AG codes more generally, RM codes and monomial cartesian codes, to name a few.  

\begin{definition}[Evaluation Code]
    Let $\mathcal{X}$ be a finite evaluation domain that contains a tuple of distinct points $\mathcal{P} = (P_1, \dots, P_n)$, and let $\mathcal{F}$ be a finite-dimensional $\F_q$-vector space of functions
    $f : \mathcal{X} \to \F_q$, which do not have poles on $\mathcal{P}$. 
    The evaluation code associated with $\mathcal{F}$ and $\mathcal{P}$ is
    $$C_{\mathcal{F}}(\mathcal{P}) = \{ (f(P_1), \dots, f(P_n)) \mid f \in \mathcal{F} \} \subseteq \F_q^n.$$
\end{definition}
Throughout this section, we assume that there exists an integer $Z$ such that every nonzero function $f \in \mathcal{F}$ vanishes on at most $Z$ points of $\mathcal{X}$, that is,
$$\left|\lbrace P\in\mathcal{X} \mid f(P)=0\rbrace\right|\leq Z \quad \text{for every nonzero} \quad f \in \mathcal{F}.$$
The parameter $Z$ plays the role of a Schwartz--Zippel bound for the function space $\mathcal{F}$. In particular, if $P$ is chosen uniformly from a subset $S \subseteq \mathcal{X}$, then every nonzero $f \in \mathcal{F}$ satisfies
$$\Pr[f(P)=0]\leq \frac{Z}{|S|}.$$

When specialized to the vector space of functions consisting of polynomials of degree less than $k$ this gives us the RS codes considered in \cite{con2024randomRS} with $Z = k-1$. There the edit distance guarantee is reduced to a necessary and sufficient algebraic condition relying on the rank of particular matrices which they call the $V$-matrix. We generalize this structured $V$-matrix in the below definition for any specified function family which contains the constant function. 

\begin{definition}[V-matrix analogy]
	Given a vector space of functions, $\mathcal{F} = \langle 1, f_1, \dotsm , f_{k-1}\rangle_{\F_q}$ and increasing subsequences $I, J \in [n]^\ell$ of length $\ell$ we define the $V$-matrix as
	$$V_{k,\ell, I, J} = \begin{bmatrix}
	1 & f_1(X_{I_1}) & \cdots & f_{k-1}(X_{I_1}) & f_1(X_{J_1}) & \cdots & f_{k-1}(X_{J_1}) \\
	1 & f_1(X_{I_2}) & \cdots & f_{k-1}(X_{I_2}) & f_1(X_{J_2}) & \cdots & f_{k-1}(X_{J_2}) \\
	\vdots & \vdots & \ddots & \vdots & \vdots & \ddots & \vdots \\
	1 & f_1(X_{I_\ell}) & \cdots & f_{k-1}(X_{I_\ell}) & f_1(X_{J_\ell}) & \cdots & f_{k-1}(X_{J_\ell}) 
	\end{bmatrix} \in \F_q(X_1, \dots, X_n)^{\ell \times (2k - 1)}.$$  
\end{definition}

The $V$-matrix has $\ell$ rows and $2k-1$ columns. Its significance is that a common subsequence between two codewords gives rise to a nonzero vector in its kernel. The following lemma makes this connection precise.

\begin{lemma} \label{notHalfsingleton}
	Let $C_{\mathcal{F}}(\mathcal{P}) \subseteq \F_q^n$ be an $[n,k,d]$ evaluation code with associated vector space $\mathcal{F} = \langle f_0, f_1, \dotsm , f_{k-1}\rangle_{\F_q}$ of functions (with $f_0 = 1$) and evaluation vector $\mathcal{P} = (P_1, \dots, P_n)$. Let $\ell \geq 2k-1$, if $C_{\mathcal{F}}(\mathcal{P})$ cannot correct $n-\ell$ arbitrary insdel errors, then there must exist two increasing subsequences $I,J \in [n]^\ell$ that agree on at most $n-d$ coordinates such that the $V$-matrix $V_{k,\ell,I,J|X = \mathcal{P}}$ evaluated at $\mathcal{P}$ does not have full column rank.
\end{lemma}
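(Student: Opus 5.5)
By Lemma~\ref{LCSrequirement} (with $\delta n = n-\ell$), if $C_{\mathcal{F}}(\mathcal{P})$ cannot correct $n-\ell$ insdel errors, then there are distinct codewords $c \neq c'$ with $\mathrm{LCS}(c,c') \geq \ell$. The plan is to turn a common subsequence of length $\ell$ into a nonzero kernel vector of $V_{k,\ell,I,J}$ evaluated at $\mathcal{P}$.

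First, I will truncate a longest common subsequence to length exactly $\ell$. This gives increasing subsequences $I,J\in[n]^\ell$ with $c_{I_t}=c'_{J_t}$ for all $t\in[\ell]$. Writing $c=(f(P_1),\dots,f(P_n))$ and $c'=(g(P_1),\dots,g(P_n))$ with $f\neq g$ in $\mathcal{F}$, I expand in the basis:
\[
f=\sum_{i=0}^{k-1}a_if_i, \qquad g=\sum_{i=0}^{k-1}b_if_i .
\]
The equations $f(P_{I_t})-g(P_{J_t})=0$ then read
\[
(a_0-b_0)+\sum_{i\ge1}a_if_i(P_{I_t})-\sum_{i\ge1}b_if_i(P_{J_t})=0 .
\]
The constant columns merge because $f_0=1$. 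Hence the vector
\[
v=(a_0-b_0,\,a_1,\dots,a_{k-1},\,-b_1,\dots,-b_{k-1})
\]
lies in the kernel of $V_{k,\ell,I,J}|_{X=\mathcal{P}}$.

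The main obstacle is showing that $v\neq 0$. If $v=0$, then $a_i=b_i=0$ for all $i\geq 1$ and $a_0=b_0$, which forces $f=g$ and so $c=c'$, a contradiction. Note that this step uses the fact that the map $f\mapsto (f(P_1),\dots,f(P_n))$ is injective: distinct codewords correspond to distinct functions, and the coefficients are well defined because $f_0,\dots,f_{k-1}$ is a basis. Since $v\neq 0$ lies in the kernel, the matrix does not have full column rank.

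It remains to check that $I$ and $J$ agree on at most $n-d$ coordinates. Let $A=\{t: I_t=J_t\}$. For each $t\in A$ we have $c_{I_t}=c'_{I_t}$. The positions $I_t$ are distinct, so $c$ and $c'$ agree in at least $|A|$ coordinates. Since $c\neq c'$, their Hamming distance is at least $d$, so they agree in at most $n-d$ coordinates, which gives $|A|\le n-d$.

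The hypothesis $\ell\geq 2k-1$ is not needed for the argument itself. It only ensures that full column rank is possible, which is the regime where the statement is meaningful.
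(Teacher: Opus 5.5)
Your proof is correct and follows essentially the same route as the paper: the LCS reformulation via Lemma~\ref{LCSrequirement}, the kernel vector $v=(a_0-b_0,a_1,\dots,a_{k-1},-b_1,\dots,-b_{k-1})$, and the minimum-distance bound on the positions with $I_t=J_t$. Your explicit check that $v\neq 0$ fills in a step the paper leaves implicit; that check only needs the fact that $c\neq c'$ forces $f\neq g$, not injectivity of the evaluation map.
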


\begin{proof}
	Suppose $C_{\mathcal{F}}(\mathcal{P})$ cannot correct $n-\ell$ insdel errors. Then, by Lemma \ref{LCSrequirement} there must exist two codewords $c = (f(P_1), \dots, f(P_n))$ and $c' = (g(P_1), \dots, g(P_n))$ such that $\mathrm{LCS}(c,c') \geq \ell$, where $f = \sum_{i = 0}^{k-1} a_if_i(X)$ and $g = \sum_{i = 0}^{k-1} b_if_i(X)$ are functions in $\mathcal{F}$. The existence of such a longest common subsequence implies the existence of two increasing subsequences $I,J \in [n]^\ell$ such that $$f(P_{I_s}) = g(P_{J_s}) \textit{ for all } s \in \{1, \dots ,\ell\}.$$
	The subsequences can agree on at most $n-d$ coordinates, otherwise we would have $f(P_i) - g(P_i) = 0$ for $n-d+1$ distinct values of $i \in I \cap J$, giving us $d_H(c,c') \leq d-1$ contradicting the fact that $d$ is the minimum distance of the code.
	Consider the vector $v = (a_0 - b_0, a_1, \dots, a_{k-1}, -b_1, \dots, -b_{k-1}) \in \F_q^{2k-1}$. By the existence of a longest common subsequence determined by $I$ and $J$ we have $V_{k,\ell, I,J | X = \mathcal{P}} \cdot v = 0$, confirming that $V_{k,\ell, I,J | X = \mathcal{P}}$ does not have full column rank.	      
\end{proof}

Having established the sufficient general algebraic condition controlling the edit distance of a random evaluation code we can now focus on bounding the probability that a given $V$-matrix is rank deficient. The main idea is to expose the $V$-matrix row by row, checking whether the newly exposed row is in the span of the rows which lie above it. If there is an excess of such linearly dependent rows then we ascertain that the $V$-matrix does not have full column rank. To conveniently represent the rows lying above we use the following notation.

\begin{definition}
    For a matrix $M$ define $M^{|r}$ to be the sub-matrix consisting of the top $r$ rows of the matrix $M$. 
\end{definition}

\begin{lemma}\label{freshvariables}
    The $r^{th}$ row of the matrix $V_{k,\ell,I,J}$ exposes a fresh variable which is not exposed in $V_{k,\ell,I,J}^{|r-1}$.
\end{lemma}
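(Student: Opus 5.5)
The plan is to read off which variables appear in each row of the $V$-matrix and then use the monotonicity of $I$ and $J$. By definition, the $r$-th row of $V_{k,\ell,I,J}$ is
$$\bigl(1, f_1(X_{I_r}), \dots, f_{k-1}(X_{I_r}), f_1(X_{J_r}), \dots, f_{k-1}(X_{J_r})\bigr).$$
It therefore involves only the variables $X_{I_r}$ and $X_{J_r}$. Likewise, the submatrix $V^{|r-1}_{k,\ell,I,J}$ involves only the variables in
$$\{X_{I_s}, X_{J_s} : s \le r-1\}.$$
So it suffices to exhibit one index among $I_r, J_r$ that does not lie in $\{I_1,\dots,I_{r-1},J_1,\dots,J_{r-1}\}$.

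The natural choice is $m_r := \max(I_r, J_r)$. Since both $I$ and $J$ are strictly increasing, for every $s \le r-1$ we have
$$I_s < I_r \le m_r \quad\text{and}\quad J_s < J_r \le m_r.$$
Hence $m_r$ differs from all earlier indices, and in fact it is strictly larger than each of them. The variable $X_{m_r}$ therefore appears in row $r$ but in none of the first $r-1$ rows. This gives the claim, including the case $I_r = J_r$, where the row contains the single fresh variable $X_{I_r}$.

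There is no real obstacle here; the only care needed is the meaning of ``exposes''. Each $f_i(X_{m_r})$ for $i \ge 1$ genuinely depends on $X_{m_r}$, since these entries are formal evaluations in the rational function field $\F_q(X_1,\dots,X_n)$. In the probabilistic application, ``fresh'' will be used to mean that the random point $P_{m_r}$ has not yet been revealed when conditioning on the points in the first $r-1$ rows. I would state this explicitly, because the subsequent Schwartz--Zippel-type step uses the bound $Z/(|\mathcal{X}|-n+1)$ on the conditional probability over $P_{m_r}$.
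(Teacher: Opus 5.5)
Your proof is correct and rests on the same fact as the paper's, namely strict monotonicity of $I$ and $J$. The paper argues by contradiction: if $X_{I_r}=X_{J_s}$ and $X_{J_r}=X_{I_t}$ for some $s,t<r$, then $I_r>I_t$ forces $J_s>J_r$, which contradicts monotonicity of $J$. You instead exhibit the fresh variable directly as $X_{\max(I_r,J_r)}$, which is slightly cleaner and also shows that its index exceeds every previously exposed index.
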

\begin{proof}
    Suppose for the sake of contradiction that the $r^{th}$ row does not expose a fresh variable. That is, there exist $s,t \in \{1, \dots, r-1\}$ such that $X_{I_r} = X_{J_s}$ and $X_{J_r} = X_{I_t}$.
    Since $I$ is an increasing subsequence we have $I_r > I_t$ which implies $I_r = J_s > J_r = I_t$. Since $s < r$ this contradicts the fact that $J$ is an increasing subsequence. Therefore the $r^{th}$ row exposes a fresh variable. 
\end{proof}

We call the $r^{th}$ row of $V_{k,\ell,I,J}$ an \emph{agreement row} if $I_r=J_r$. By Lemma \ref{LCSrequirement}, and the argument of Lemma~10, there are at most $n-d$ such rows. It will be convenient to expose these rows first.

\begin{definition}
    Let $\widetilde{V}_{k,\ell,I,J}$ be the matrix obtained from $V_{k,\ell,I,J}$ by moving all agreement rows to the top, while preserving the relative order of the remaining rows.
\end{definition}

Notice that the conclusion of Lemma \ref{freshvariables} continues to hold for $\widetilde{V}_{k,\ell,I,J}$. Indeed, if $I_r = J_r$, then the variable $X_{I_r}=X_{J_r}$ appears in no other row of $V_{k,\ell,I,J}$, since both $I$ and $J$ are strictly increasing. Consequently, moving agreement rows to the top cannot expose a variable belonging to any of the remaining rows, and every row of $\widetilde{V}_{k,\ell,I,J}$ still contains a fresh variable relative to the rows preceding it.

\begin{lemma}\label{rowinaboverows}
    Let $I,J \in [n]^\ell$ be increasing subsequences that agree on at most $n - d$ coordinates. The probability that the $r^{th}$ row of $\widetilde{V}_{k,\ell,I,J|X = \mathcal{P}}$, with $r > n-d$, is in $W = \mathrm{rowspan}(\widetilde{V}_{k,\ell,I,J|X = \mathcal{P}}^{|r-1})$ with $\dim(W) < 2k-1$, is bounded from above by $\left(\frac{Z}{|\mathcal{X}| - n + 1}\right)$.
\end{lemma}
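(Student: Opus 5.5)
We condition on all evaluation points that appear in the top $r-1$ rows of $\widetilde{V}_{k,\ell,I,J}$, and then use the fresh variable supplied by Lemma~\ref{freshvariables} together with the zero bound $Z$. Fix the values of all variables $X_i$ occurring in $\widetilde{V}^{|r-1}_{k,\ell,I,J}$. This determines $W$. Since $r>n-d$ and there are at most $n-d$ agreement rows, all of them lie among the first $r-1$ rows. Hence the $r$-th row is a non-agreement row, so $I_r\neq J_r$. By the remark after the definition of $\widetilde{V}$, at least one of $X_{I_r},X_{J_r}$ is fresh. Say it is $X_{I_r}$; the case $X_{J_r}$ is symmetric. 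Condition further on the other variable $X_{J_r}$ if it is also fresh, and on every remaining point of $\mathcal{P}$ except $P_{I_r}$. Given all this, $P_{I_r}$ is uniform over $\mathcal{X}$ minus at most $n-1$ already chosen distinct points, that is, over a set $S$ with $|S|\geq |\mathcal{X}|-n+1$.

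\textbf{Producing a nonzero function.} Since $\dim W<2k-1$, there is a nonzero vector $u=(u_0,u_1,\dots,u_{k-1},w_1,\dots,w_{k-1})\in\F_q^{2k-1}$ orthogonal to $W$. The $r$-th row lies in $W$ only if its inner product with $u$ vanishes:
$$u_0+\sum_{i=1}^{k-1}u_if_i(P_{I_r})+\sum_{i=1}^{k-1}w_if_i(P_{J_r})=0.$$
Set $h=u_0+\sum_{i\geq1}u_if_i+\sum_{i\geq1}w_if_i(P_{J_r})\in\mathcal{F}$, where $P_{J_r}$ is now a fixed point, so the last sum is a constant absorbed into the constant term. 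If $h\neq 0$, the event forces $h(P_{I_r})=0$. By the Schwartz--Zippel-type property, this has probability at most $Z/|S|\leq Z/(|\mathcal{X}|-n+1)$. Averaging over the conditioning then gives the claimed bound.

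\textbf{The obstacle: the case $h=0$.} This happens exactly when $u_1=\dots=u_{k-1}=0$, because $1,f_1,\dots,f_{k-1}$ are linearly independent. Then $u=(u_0,0,\dots,0,w_1,\dots,w_{k-1})$, and we must choose $u$ to avoid this.

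If some $u\in W^\perp$ has a nonzero $u$-part, use it and we are done. Otherwise every vector of $W^\perp$ has vanishing $u$-part, i.e.\ $W^\perp\subseteq\{(u_0,0,\dots,0,w)\}$. In that case the event ``row $r\in W$'' should be handled through the $J$-side instead. Put $h'=u_0+\sum_i w_if_i$ evaluated at $P_{J_r}$. Since $u\neq 0$ and the $f_i$ are independent, $h'$ is nonzero as a function, unless $w=0$. If $w=0$ as well, then $u=(u_0,0,\dots,0)$ is orthogonal to $W$. But the rows already contain the constant $1$ in the first column, so $u_0=0$ and $u=0$, a contradiction. So if $X_{J_r}$ is also fresh, we condition on $P_{I_r}$ and randomize over $P_{J_r}$ instead. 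If $X_{J_r}$ is not fresh, I expect this degenerate sub-case to be the delicate point. The plan there is to note that the row is in $W$ iff its $(0,\dots,0,f(P_{J_r}))$-adjusted projection vanishes, and to argue that the $I$-block of the row, $(1,f_1(P_{I_r}),\dots)$, must then lie in the projection of $W$ onto the first $k$ coordinates. I would first try showing that this projection is proper, since the $w$-part of $W^\perp$ is nonzero, and then reduce to a nonzero function in $\mathcal{F}$ vanishing at the fresh point $P_{I_r}$, again yielding the bound $Z/(|\mathcal{X}|-n+1)$.
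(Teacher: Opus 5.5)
\textbf{There is a genuine gap.} Your argument is fine in two cases: when some $u\in W^\perp$ has a nonzero $I$-block, and when both $X_{I_r}$ and $X_{J_r}$ are fresh. The gap is the remaining subcase, which you leave open. In it, $X_{I_r}$ is fresh, $P_{J_r}=\alpha$ is already assigned, and every vector of $W^\perp$ has the form $(u_0,0,\dots,0,w)$.

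Your sketched plan for this subcase cannot work. The inclusion $W^\perp\subseteq\{(u_0,0,\dots,0,w)\}$ means $W$ contains every vector supported on the coordinates of $u_1,\dots,u_{k-1}$. So the projection of $W$ onto the first $k$ coordinates is all of $\F_q^k$, not a proper subspace. Whether row $r$ lies in $W$ then does not depend on the fresh point $P_{I_r}$ at all, so no nonzero function of the fresh point can be extracted. The bound has to come from showing the event is empty.

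A related inaccuracy: you say $h=0$ exactly when $u_1=\dots=u_{k-1}=0$. This drops the condition that the constant $u_0+\sum_i w_if_i(P_{J_r})$ also vanish. Proving that this constant is nonzero is precisely what is missing.

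The missing ingredient is the hypothesis $r>n-d$, which you use only to conclude that row $r$ is not an agreement row. Take a nonzero $u=(u_0,0,\dots,0,w)\in W^\perp$ and set $h'=u_0+\sum_j w_jf_j\in\mathcal{F}$; you already showed $h'\neq 0$.
\begin{itemize}
\item Orthogonality to the first $r-1$ rows gives $h'(P_{J_s})=0$ for all $s<r$.
\item Row $r\in W$ would add $h'(P_{J_r})=0$.
\item The $J$-indices of distinct rows are distinct, so $h'$ would vanish at $r\ge n-d+1$ evaluation points.
\item Since $C_{\mathcal{F}}(\mathcal{P})$ has dimension $k=\dim\mathcal{F}$, the vector $(h'(P_1),\dots,h'(P_n))$ would be a nonzero codeword of weight at most $d-1$, a contradiction.
\end{itemize}
Hence the event has probability $0$ in this subcase, and the lemma follows.

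This is exactly how the paper proceeds. For a row $\rho(\alpha,X)$ or $\rho(X,\alpha)$, it shows that any $\omega\in W^\perp$ whose function of the fresh variable vanishes identically has its other side vanishing at $r>n-d$ evaluation points, and so $\omega=0$. Therefore some $\omega_0\in W^\perp$ gives a nonzero function of the fresh point. The doubly-fresh row is reduced to this case by assigning one of the two variables.
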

\begin{proof}
    Let $\rho(X,Y):= [1, f_1(X), \dots, f_{k-1}(X), f_1(Y), \dots, f_{k-1}(Y)]$.
    Since $r > n - d$ the $r^{th}$ row of $\widetilde{V}_{k,\ell,I,J}$ does not have any agreements and is therefore symbolically of the form $$\rho(\alpha, X), \rho(X,\alpha) \textit{ or } \rho(X,Y),$$
    where $X,Y$ are fresh variables exposed by the row that are not already assigned in $\widetilde{V}_{k,\ell,I,J|X = \mathcal{P}}^{|r-1}$ as guaranteed by Lemma \ref{freshvariables} and $\alpha$ is an evaluation point which has already been assigned in $\widetilde{V}_{k,\ell,I,J|X = \mathcal{P}}^{|r-1}$.
    \\
    We restrict our attention to the case $\rho(\alpha, X)$. Let $W = \mathrm{rowspan}(\widetilde{V}_{k,\ell,I,J|X = \mathcal{P}}^{|r-1})$ with $\dim(W) < 2k-1$, then $\rho(\alpha,X) \in W$ if and only if $\rho(\alpha, X)\cdot \omega = 0$ for all $\omega \in W^\bot$. Writing $\omega \in W^{\bot}$ as $\omega = [c_0, \dots, c_{k-1}, d_1, \dots, d_{k-1}] \in W^{\bot}$ we have 
    $$L_\omega(\alpha, X):=  \rho(\alpha, X) \cdot \omega = \sum_{i = 0}^{k-1}c_i f_i(\alpha) + \sum_{j = 1}^{k-1}d_j f_j(X).$$
    Now, $\rho(\alpha,X) \in W \implies L_{\omega}(\alpha, X) = 0$, therefore $\Pr[\rho(\alpha,X) \in W] \leq \Pr[L_\omega(\alpha, X) = 0]$ for all $\omega \in W^\bot$. To ensure that this bound is useful we need to show the existence of $\omega_0 \in W^\bot$ such that $L_{\omega_0}(\alpha, X)$ is not identically zero. 
    \\
    Suppose for the sake of contradiction that $L_{\omega}(\alpha, X) \equiv 0$ for any $\omega = [c_0, \dots, c_{k-1}, d_1, \dots, d_{k-1}] \in W^\bot$. This would mean that $d_j = 0$ for all $j \in \{1, \dots, k-1\}$ since $f_1(X), \dots, f_{k-1}(X)$ are linearly independent. Furthermore $L'_{\omega}(\alpha) := \sum_{i = 0}^{k-1}c_i f_i(\alpha) = 0$ and $L'_\omega(P_{I_s}) = 0$ for all $s \in \{1, \dots, r-1\}$. Notice that $[L'_\omega(P_1), \dots, L'_\omega(P_n)]$ is a codeword which under our assumption has $r > n - d$ zeros, the only such codeword is the zero codeword and we get $c_i = 0$ for all $i \in \{0, \dots, k-1\}$. Therefore $\omega = 0$ and
    $$\dim(W^\bot) = 0.$$  
    But we know that
    $$\dim(W^\bot) = 2k-1 - \dim(W).$$
    This forces $\dim(W) = 2k-1$ which is a contradiction. Therefore there exists an $\omega_0 \in W^\bot$ such that $L_{\omega_0} \not\equiv 0$. 
    \\
    Now, since $L_\omega(\alpha, X) \in \mathcal{F}$, it must by assumption vanish on at most $Z$ points of $\mathcal{X}$. The freshly exposed point $X$ is uniform amongst all the unused evaluation points for which there are at least $|\mathcal{X}| - n + 1$ possibilities, therefore we have 
    $$\Pr[\rho(\alpha,X) \in W] \leq \Pr[L_\omega(\alpha, X) = 0] \leq \left(\frac{Z}{|\mathcal{X}| - n + 1}\right).$$
    The argument for $\rho(X, \alpha)$ follows in a symmetric fashion to the argument for $\rho(\alpha, X)$ and the argument for $\rho(X,Y)$ involves simply assigning $Y$ to reduce to the $\rho(X, \alpha)$ case. 
\end{proof}

\begin{lemma} \label{Probnotfullrank}
    Let $I,J \in [n]^\ell$ be increasing subsequences of length $\ell$ that agree on at most $n-d$ coordinates. Then the probability that $V_{k, \ell, I, J|X=\mathcal{P}}$ does not have full column rank is bounded from above by 
    $$\binom{\ell - (n - d + 1)}{\tau} \left(\frac{Z}{|\mathcal{X}| - n + 1}\right)^\tau , $$
    where $\tau  := \ell - k - (n - d) + 1$.
\end{lemma}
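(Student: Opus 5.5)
We work with $\widetilde{V}_{k,\ell,I,J}$ rather than $V_{k,\ell,I,J}$. Permuting rows does not change the column rank, so it suffices to bound the probability that $\widetilde{V}_{k,\ell,I,J|X=\mathcal{P}}$ lacks full column rank. Expose the rows one at a time, in the order of $\widetilde{V}$. For each $r$, call row $r$ a \emph{failure} if it lies in $W_r := \mathrm{rowspan}(\widetilde{V}^{|r-1})$ and $\dim(W_r)<2k-1$. A row that is not a failure increases the rank by one, unless the rank is already $2k-1$.

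The first step is a deterministic counting claim. We only consider rows with index $r \geq n-d+1$. By construction of $\widetilde{V}$ these are non-agreement rows: there are at most $n-d$ agreement rows, and they sit at the top.

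\begin{itemize}
\item The first $n-d$ rows contribute rank at most $n-d$.
\item Among rows $n-d+1,\dots,\ell$, each non-failure row raises the rank by one until full rank is reached.
\item Hence, if the final rank is below $2k-1$, the number of non-failure rows among these $\ell-(n-d)$ rows is at most $2k-2-\mathrm{rank}(\widetilde{V}^{|n-d})$.
\end{itemize}

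I will bound this a little more carefully so that the exponent comes out as $\tau=\ell-k-(n-d)+1$. The first $n-d$ rows can only have rank as large as the codeword argument allows. The refined count I expect is this: rows beyond $n-d$ must supply at least $k$ non-failures before the whole matrix could possibly be deficient only in its "second half" columns. The target is to show that at least $\tau$ of the rows in positions $n-d+2,\dots,\ell$ are failures, a set of $\ell-(n-d+1)$ rows. That matches the binomial coefficient $\binom{\ell-(n-d+1)}{\tau}$ in the statement. If the exact bookkeeping gives a slightly different constant, it only changes the binomial index. I would first verify that at most $k-1$ non-failures can occur among these rows when the rank stays below $2k-1$, since the rank deficiency must already be visible against the first $k$ columns.

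Next, union bound over the set $S$ of rows declared failures, with $|S|=\tau$, chosen from the $\ell-(n-d+1)$ candidate positions. For a fixed $S=\{r_1<\dots<r_\tau\}$, bound the probability that every row in $S$ is a failure by chaining conditional probabilities.

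\begin{itemize}
\item Condition on all evaluation points exposed in $\widetilde{V}^{|r_j-1}$. This conditioning fixes $W_{r_j}$.
\item Lemma~\ref{freshvariables} guarantees that row $r_j$ contains a fresh point, uniform over at least $|\mathcal{X}|-n+1$ unused points.
\item Lemma~\ref{rowinaboverows} then bounds the conditional failure probability by $Z/(|\mathcal{X}|-n+1)$.
\item Multiplying over $j$ gives $\left(\frac{Z}{|\mathcal{X}|-n+1}\right)^\tau$, and summing over $S$ gives the claimed bound.
\end{itemize}

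The main obstacle is the first step: justifying exactly $\tau$ forced failures, including the $+1$ offset. The bound in Lemma~\ref{rowinaboverows} only holds while $\dim W<2k-1$, and in the rank-deficient event this is automatic for every row.
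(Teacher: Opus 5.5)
Your architecture matches the paper's: reorder to $\widetilde V_{k,\ell,I,J}$, expose rows, take a union bound over the positions of $\tau$ failing rows, and chain the conditional bound of Lemma~\ref{rowinaboverows}. However, the step you flag as the main obstacle is left unproved, and your hints for it point the wrong way. Your inequality ``at most $2k-2-\mathrm{rank}(\widetilde V^{|n-d})$ non-failures'' is only useful with a \emph{lower} bound on the rank of the top rows. You only record the upper bound $\mathrm{rank}\le n-d$ and say the rank is ``as large as the codeword argument allows.''

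Likewise, the rank deficiency is not ``visible against the first $k$ columns.'' On the contrary, those columns are exactly where full rank is guaranteed. The missing fact is this. The first $k$ entries of each row are $(f_0(P_{I_r}),\dots,f_{k-1}(P_{I_r}))$, and distinct rows have distinct $I$-indices. So the top $n-d+1$ rows restricted to the first $k$ columns form the transpose of a generator matrix of the code punctured onto $n-d+1$ coordinates. A nonzero $c$ in the kernel of this $(n-d+1)\times k$ block would give a nonzero $\sum_i c_i f_i\in\mathcal F$ vanishing at $n-d+1$ of the points $P_1,\dots,P_n$. That is a nonzero codeword of weight at most $d-1$, which is impossible. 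Hence $\mathrm{rank}(\widetilde V^{|n-d+1}_{k,\ell,I,J|X=\mathcal P})\ge k$.

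With this in hand the count is immediate. If $\widetilde V$ has rank at most $2k-2$, at most $k-2$ of the rows $n-d+2,\dots,\ell$ can raise the rank. So at least $\ell-(n-d+1)-(k-2)=\tau$ of these $\ell-(n-d+1)$ rows are failures. Every such $W_r$ automatically has dimension $<2k-1$ and $r>n-d$, so Lemma~\ref{rowinaboverows} applies. This gives exactly the coefficient $\binom{\ell-(n-d+1)}{\tau}$.

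The target you propose to verify, ``at most $k-1$ non-failures,'' does not deliver the statement. Over rows $n-d+2,\dots,\ell$ it yields only $\tau-1$ failures. Over rows $n-d+1,\dots,\ell$ you would need $\mathrm{rank}(\widetilde V^{|n-d})\ge k-1$, which also follows from the codeword argument since the evaluation kernel on $n-d$ of the points has dimension at most $1$. That yields $\tau$ failures among $\ell-(n-d)$ rows, hence the larger coefficient $\binom{\ell-(n-d)}{\tau}$. So ``it only changes the binomial index'' means you would be proving a weaker inequality than the lemma asserts.
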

\begin{proof}
    The (transpose of) the leftmost $n - d + 1 \times k$ minor of the matrix $\widetilde{V}_{k, \ell, I, J|X = \mathcal{P}}^{|n-d + 1}$ can be seen as a puncturing of the generator matrix of the code onto $n - d + 1$ columns. We know that this punctured code must have dimension $k$ and therefore $\mathrm{rank}(\widetilde{V}_{k, \ell, I, J|X = \mathcal{P}}^{|n-d + 1}) \geq k$. For our bound it is enough to consider the worst case when $\mathrm{rank}(\widetilde{V}_{k, \ell, I, J|X = \mathcal{P}}^{|n-d + 1}) = k$. 
    In that case the amount of slack remaining is $$\tau - 1 = (\ell - (2k-1)) - ((n-d+1)- k) = \ell - k - (n-d).$$
    Let $(\rho_{r_1}, \dots, \rho_{r_\tau})$ with $n - d + 1 < r_1 < \cdots < r_\tau \leq \ell$ be a collection of rows of $\widetilde{V}_{k,\ell, I, J}$. Then by Lemma \ref{rowinaboverows} we have $$\Pr[\rho_{r_1} \in W_{r_1}\land \cdots \land \rho_{r_\tau} \in W_{r_\tau}] \leq \left(\frac{Z}{|\mathcal{X}| - n + 1}\right)^\tau,$$
    where $W_{r_i} := \mathrm{rowspan}(\widetilde{V}_{k,\ell,I,J|X = \mathcal{P}}^{|r_i - 1})$. The matrix $\widetilde{V}_{k, \ell, I, J|X=\mathcal{P}}$ does not have full column rank only if there exist $\tau$ rows $(\rho_{r_1}, \dots, \rho_{r_\tau})$ such that $\rho_{r_1} \in W_{r_1}\land \cdots \land \rho_{r_\tau} \in W_{r_\tau}$. Taking the union bound over all possible $(\rho_{r_1}, \dots, \rho_{r_\tau})$ with $n - d + 1 < r_1 < \cdots < r_\tau \leq \ell$ we get 
    $$\binom{\ell - (n - d + 1)}{\tau} \left(\frac{Z}{|\mathcal{X}| - n + 1}\right)^\tau.$$
\end{proof}

\begin{theorem} \label{ultimate}
    Let $C_{\mathcal{F}}(\mathcal{P}) \subseteq \F_q^n$ be an $[n,k,d]$ evaluation code with associated vector space $\mathcal{F} = \langle f_0, f_1, \dotsm , f_{k-1}\rangle_{\F_q}$ of functions (with $f_0 = 1$) and evaluation vector $\mathcal{P} = (P_1, \dots, P_n)$. Let $\ell \geq 2k-1$, then
    $$\Pr[C_{\mathcal{F}}(\mathcal{P}) \textit{ cannot correct $n-\ell$ insdel errors}] \leq \binom{n}{l}^2 \binom{\ell - (n - d + 1)}{\tau} \left(\frac{Z}{|\mathcal{X}| - n + 1}\right)^\tau,$$
    where $\tau  := \ell - k - (n - d) + 1$.
\end{theorem}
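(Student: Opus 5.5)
The plan is to combine Lemma~\ref{notHalfsingleton} with Lemma~\ref{Probnotfullrank} via a union bound over pairs of increasing subsequences. First, by Lemma~\ref{notHalfsingleton}, the event that $C_{\mathcal{F}}(\mathcal{P})$ cannot correct $n-\ell$ insdel errors is contained in the event that there exist increasing subsequences $I,J\in[n]^\ell$, agreeing on at most $n-d$ coordinates, such that $V_{k,\ell,I,J|X=\mathcal{P}}$ does not have full column rank. Here the randomness is the uniformly random choice of distinct points $P_1,\dots,P_n\in\mathcal{X}$. The distance $d$ of the code depends on $\mathcal{P}$, so I would fix $d$ as a deterministic lower bound. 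Since every nonzero $f\in\mathcal{F}$ has at most $Z$ zeros, $d\ge n-Z$ for every realization; alternatively, $d$ can be whatever bound is guaranteed for all choices of $\mathcal{P}$, as in the AG setting. The agreement count $n-d$ is then a fixed quantity, and the deterministic parts of Lemmas~\ref{rowinaboverows} and \ref{Probnotfullrank} hold for every realization.

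Second, I would count the pairs $(I,J)$. An increasing subsequence of length $\ell$ in $[n]$ is determined by an $\ell$-subset of $[n]$, so there are $\binom{n}{\ell}$ choices for each of $I$ and $J$, and at most $\binom{n}{\ell}^2$ pairs. Restricting to pairs agreeing on at most $n-d$ coordinates only shrinks this set, so $\binom{n}{\ell}^2$ remains a valid upper bound.

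Third, for each fixed admissible pair $(I,J)$, Lemma~\ref{Probnotfullrank} bounds the probability that $V_{k,\ell,I,J|X=\mathcal{P}}$ is rank deficient by $\binom{\ell-(n-d+1)}{\tau}\left(\frac{Z}{|\mathcal{X}|-n+1}\right)^\tau$, with $\tau=\ell-k-(n-d)+1$. Note that $\widetilde{V}$ is a row permutation of $V$, so the two matrices have the same column rank. A union bound over the at most $\binom{n}{\ell}^2$ pairs then gives exactly the stated bound.

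The only delicate step is the first: making sure the conditioning in Lemma~\ref{Probnotfullrank} is legitimate when $d$ is a random quantity. I would handle this by stating explicitly that $d$ denotes a designed distance valid for all realizations of $\mathcal{P}$. Everything after that is a direct union bound.
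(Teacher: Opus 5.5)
Your proposal is correct and follows the paper's proof exactly. Lemma~\ref{notHalfsingleton} reduces the failure event to rank deficiency of some $V$-matrix, Lemma~\ref{Probnotfullrank} bounds each such event, and a union bound over at most $\binom{n}{\ell}^2$ pairs $(I,J)$ finishes. Your two extra remarks make explicit points the paper leaves implicit: that $\widetilde{V}$ is a row permutation of $V$ and so has the same column rank, and that $d$ should be read as a deterministic lower bound on the minimum distance (e.g.\ $d \ge n-Z$).
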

\begin{proof}
    By Lemma \ref{notHalfsingleton} if $\mathcal{C_F(P)}$ cannot correct $n-\ell$ insdel errors then there exist two increasing subsequences $I,J \in [n]^\ell$ that agree on at most $n-d$ coordinates such that $V_{k,\ell, I,J | X = \mathcal{P}}$ does not have full column rank. Lemma \ref{Probnotfullrank} bounds the probability that a specific $V$-matrix does not have full column rank. The number of admissible pairs $(I,J)$ is at most $\binom{n}{\ell}^2$. Therefore, taking a union bound over all such pairs gives us the desired bound,
    $$\Pr[C_{\mathcal{F}}(\mathcal{P}) \textit{ cannot correct $n-\ell$ insdel errors}] \leq \binom{n}{l}^2 \binom{\ell - (n - d + 1)}{\tau} \left(\frac{Z}{|\mathcal{X}| - n + 1}\right)^\tau.$$
    A crude bound on the binomial coefficients gives us a less accurate but much simpler bound,
    $$\Pr[C_{\mathcal{F}}(\mathcal{P}) \textit{ cannot correct $n-\ell$ insdel errors}] \leq 2^{3n}\left(\frac{Z}{|\mathcal{X}| - n + 1}\right)^\tau.$$
\end{proof}

\section{Specializing to specific families of evaluation codes}

In this section we instantiate Theorem \ref{ultimate} to specific families of evaluation codes. First, we demonstrate the generality of our theorem by considering the vector space of $m$-variate polynomials of degree at most $r$ with coefficients from the finite field $\F_q$ which gives rise to the family of $q$-ary RM codes. We obtain a constant-fraction guarantee for first-order binary RM codes, complementing the work of \cite{qiu2026binary} on insdel-robust subcodes of binary first-order RM codes. Second, we look at RS codes and recover the linear field size requirement of \cite{con2024randomRS} with an asymptotic improvement in terms of the additive gap $\varepsilon$. Third, we push the limits on the field size by instantiating the general theorem with an asymptotically good family of AG codes arising from the GS tower of function fields. This ultimately gives us a structured family of codes capable of almost attaining the half-Singleton bound over constant-sized fields.

\subsection{Reed--Muller codes}
\begin{definition}[$q$-ary Reed--Muller code]
    For a prime power $q$ and integers $m \geq 1$ and $r < q$, let $\mathcal{F} = \lbrace f \in \F_q[X_1,\dots,X_m] \mid \deg(f)\leq r\rbrace$. Let $\mathcal{P} = (P_1, \dots, P_{q^m})$ be the points of $\F_q^m$ ordered lexicographically. The $q$-ary Reed--Muller code of order $r$ in $m$ variables is defined as
    $$C_{\mathcal{F}}(\mathcal{P}) = \mathrm{RM}_q(m,r) = \lbrace \left(f(P_1),\dots,f(P_{q^m})\right) \mid f \in \mathcal{F} \rbrace.$$ 
    It is a linear $[q^m,k,d]_q$ code with dimension $k=\binom{m+r}{m}$ and minimum distance $d = (q-r)q^{m-1}$.
\end{definition}

\begin{corollary} \label{RMq}
    Let $\varepsilon \in (0,1)$, $m > 0$. Suppose that 
    $$k+ \lfloor\varepsilon n\rfloor > rq^{m-1} \quad \text{and} \quad \frac{rq^{m-1}}{q^m-n+1} \leq 2^{-\left(\frac{4n}{\,k+\lfloor\varepsilon n\rfloor-rq^{m-1}}\right)}.$$
    Then with probability at least $1-2^{-n}$ a uniformly random ordered length-$n$ puncturing of $\mathrm{RM}_q(m,r)$ corrects at least $(1 - \varepsilon)n-2k+1$ insdel errors.
\end{corollary}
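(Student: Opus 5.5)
We will instantiate Theorem~\ref{ultimate} with evaluation domain $\mathcal{X} = \F_q^m$, so $|\mathcal{X}| = q^m$, and function space $\mathcal{F}$ of polynomials of degree at most $r$. This space contains the constant $1$, so we can choose a basis with $f_0 = 1$.

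\textbf{Zero parameter.} By the Schwartz--Zippel lemma, a nonzero polynomial of degree at most $r < q$ vanishes on at most $rq^{m-1}$ points of $\F_q^m$. Hence we may take $Z = rq^{m-1}$.

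\textbf{Parameters of the punctured code.} The uniformly random ordered puncturing $(P_1,\dots,P_n)$ of distinct points is exactly the random evaluation code $C_{\mathcal{F}}(\mathcal{P})$ of length $n$. We will use $d \geq n - Z$: a nonzero $f$ has at most $Z$ zeros among the $P_i$. This also gives $d \ge 1$ whenever $n > Z$, so evaluation is injective and the dimension is $k$. If $n \le Z$, we instead argue directly with the bound $n-d \le Z$, which is all that Lemmas~\ref{rowinaboverows} and~\ref{Probnotfullrank} actually use. Concretely, I would rerun those lemmas with the number of agreement rows bounded by $Z$, since two codewords agreeing on more than $Z$ common positions must coincide. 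This yields slack
\[
\tau \;\ge\; \ell - k - Z + 1 .
\]

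\textbf{Choice of $\ell$.} Set $\ell = 2k - 1 + \lfloor \varepsilon n\rfloor$, so that $n - \ell = (1-\varepsilon)n - 2k + 1$, up to the floor, is the target number of correctable insdel errors. If $\ell > n$, the claim is vacuous or trivial, so assume $\ell \le n$. Note also $\ell \ge 2k-1$. Then
\[
\tau \;\ge\; k + \lfloor \varepsilon n\rfloor - rq^{m-1} \;>\; 0,
\]
where positivity is the first hypothesis.

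\textbf{Probability bound.} The crude form of Theorem~\ref{ultimate} gives a failure probability of at most
\[
2^{3n}\left(\frac{rq^{m-1}}{q^m-n+1}\right)^{\tau}.
\]
The second hypothesis bounds the ratio by $2^{-4n/(k+\lfloor\varepsilon n\rfloor - rq^{m-1})}$. Since the ratio is at most $1$ and $\tau \ge k+\lfloor\varepsilon n\rfloor - rq^{m-1}$, raising it to the power $\tau$ gives at most $2^{-4n}$. The total failure probability is therefore at most $2^{3n}\cdot 2^{-4n} = 2^{-n}$.

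\textbf{Main obstacle.} The delicate step is justifying $n-d \le Z$ and the lower bound on $\tau$ in the regime where $d$ is only known via $Z$. Specifically, one must check that Lemma~\ref{Probnotfullrank}'s initial rank count, $\mathrm{rank} \ge k$ after $n-d+1$ rows, still holds using $Z+1$ rows in place of $n-d+1$. This holds because no nonzero $f\in\mathcal{F}$ vanishes on $Z+1$ distinct points. I would phrase the whole argument with $Z$ replacing $n-d$ throughout to sidestep this issue.
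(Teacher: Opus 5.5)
Your proposal is correct and is essentially the paper's proof: Schwartz--Zippel gives $Z = rq^{m-1} \ge n-d$, the choice $\ell = 2k-1+\lfloor\varepsilon n\rfloor$ gives $\tau \ge k+\lfloor\varepsilon n\rfloor - rq^{m-1} > 0$, and the second hypothesis turns the crude bound $2^{3n}\bigl(Z/(q^m-n+1)\bigr)^{\tau}$ from Theorem~\ref{ultimate} into $2^{-n}$. Running Lemmas~\ref{rowinaboverows} and~\ref{Probnotfullrank} with the deterministic $Z$ in place of the random $n-d$ is a sensible tightening that the paper glosses over, though the case $n \le Z$ you worry about never arises in the nontrivial regime: $\ell \le n$ and the first hypothesis already force $n \ge \ell > Z + k - 1 \ge Z$.
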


\begin{proof}
    We apply the general Theorem \ref{ultimate} with $\mathcal{X}=\mathbb{A}^m(\F_q)=\F_q^m$ and the function space $\mathcal{F}$ of $m$-variate polynomials of total degree at most $r$. By Schwartz-Zippel, every nonzero $f \in \mathcal{F}$ vanishes on at most $Z = rq^{m-1}$ points of $\mathcal{X}$. This gives us $n-d \leq Z = rq^{m-1}$. Taking $\ell=2k - 1 + \lfloor\varepsilon n\rfloor$, we have 
    $$\tau = \ell-k-(n-d)+1 \geq k + \lfloor\varepsilon n\rfloor -rq^{m-1} > 0.$$
    By Theorem \ref{ultimate} we have, 
    \begin{align*}
        \Pr[C_{\mathcal{F}}(\mathcal{P}) \textit{ cannot correct $n-2k + 1 - \lfloor \varepsilon n \rfloor$ insdel errors}] &\leq 2^{3n}\left(\frac{Z}{q^m-n+1}\right)^\tau \\
        &\leq 2^{3n}\left(\frac{rq^{m-1}}{q^m-n+1}\right)^\tau \leq 2^{-n},
    \end{align*}
    where the final inequality follows from the hypothesis. Therefore, with probability at least $1-2^{-n}$, the random ordered puncturing corrects $n-\ell = n-2k+1-\lfloor\varepsilon n\rfloor \geq (1 - \varepsilon)n - 2k + 1$ insdel errors.
\end{proof}

\begin{remark}
    The hypotheses in Corollary \ref{RMq} are satisfied in a particular regime. Concretely, for a fixed $\delta \in (0,1)$ choose a sufficiently small $\gamma \in (0,1)$ such that 
    $$\frac{\gamma^m}{m!}+ \varepsilon\delta > \gamma \quad \text{and} \quad \frac{\gamma}{1-\delta} < 2^{-\left( \frac{4\delta}{\frac{\gamma^m}{m!}+\varepsilon\delta-\gamma}\right)}.$$
    Notice that such a choice of $\gamma$ always exists. Let $n=\lfloor\delta q^m\rfloor$ so that $q = O(n^{1/m})$ and take $r=\lfloor\gamma q\rfloor$. It is easy to verify that under these conditions the required hypotheses are met. These conditions constrain the randomly punctured RM codes to a low-rate regime. In conclusion, for every
    $$R\in\left(0,\frac{\gamma^m}{m!\delta}\right),$$ 
    Corollary \ref{RMq} yields families of codes over fields of size $q=O(n^{1/m})$ that can correct at least $(1 - \varepsilon)n - 2k + 1$ insdel errors.
\end{remark}

We can also derive another interesting consequence of  Theorem \ref{ultimate} by considering first-order binary RM codes. When $m = o(n)$ we show that a uniformly random ordered length-$n$ puncturing of $\mathrm{RM}_2(m,1)$ can correct a linear number $\delta n$ of insdel errors. To arrive at this result we first present the following proposition which shows that a random puncturing has minimum distance close to $n/2$ with high probability. 

\begin{proposition} \label{RM2mindist}
    Let $m,n$ be positive integers with $m = o(n)$ and fix $\mu \in \left(0,\frac{1}{2}\right)$. Suppose $\mathcal{P} = (P_1, \dots, P_n)$ is chosen uniformly at random from the set of all $n$-tuples of distinct points in $\F_2^m$. Then, with probability at least $1 - 2^{-\Omega(\mu^2n)}$, the random ordered length-$n$ puncturing $\mathcal{C_F(P)}$ of $\mathrm{RM}_2(m,1)$ has minimum Hamming distance $d > \left( \frac{1}{2} - \mu\right)n$.
\end{proposition}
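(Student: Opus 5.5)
We prove the statement with a union bound over all nonzero codewords, using a concentration bound for sampling without replacement. The code $\mathrm{RM}_2(m,1)$ consists of the evaluations of affine functions $f(x)=a\cdot x+b$ with $a\in\F_2^m$, $b\in\F_2$. The minimum distance of the punctured code is the minimum, over nonzero $f$ in the span, of the number of $i\in[n]$ with $f(P_i)\neq 0$.

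\textbf{Trivial cases.} We handle two kinds of nonzero affine functions directly.
\begin{itemize}
\item If $a=0$ and $b=1$, then $f$ is the constant $1$, which has weight $n$ on any puncturing.
\item If $a=0$ and $b=0$, then $f$ is zero and is excluded.
\end{itemize}
So it suffices to consider the $2(2^m-1)$ functions with $a\neq 0$.

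\textbf{Weight of a single function.} Fix $f$ with $a\neq 0$. Exactly $2^{m-1}$ points of $\F_2^m$ satisfy $f(x)=1$. The unordered set $\{P_1,\dots,P_n\}$ is a uniformly random $n$-subset of $\F_2^m$, since ordering does not affect weight. Hence the weight $w_f$ of the punctured codeword is hypergeometric, with $n$ draws from a population of size $2^m$ containing exactly half successes, and $\mathbb{E}[w_f]=n/2$. We apply Hoeffding's inequality, which holds for sampling without replacement (Hoeffding 1963 shows it is dominated by the with-replacement case):
$$\Pr\big[w_f\le (\tfrac12-\mu)n\big]\le e^{-2\mu^2 n}.$$
This requires $n\le 2^m$, which is implicit because the points must be distinct.

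\textbf{Union bound.} Summing over all $a\neq 0$ and both choices of $b$, the failure probability is at most
$$2^{m+1}e^{-2\mu^2 n}=2^{m+1-(2\log_2 e)\mu^2 n}.$$
Since $\mu$ is fixed and $m=o(n)$, we have $m+1\le \mu^2 n$ for $n$ large. The bound is then $2^{-\Omega(\mu^2 n)}$. On the complementary event every nonzero codeword has weight greater than $(\tfrac12-\mu)n$, so $d>(\tfrac12-\mu)n$.

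\textbf{Main obstacle.} The only delicate point is the dependence among the coordinates, since the points are drawn without replacement. Citing Hoeffding's without-replacement result settles this. Alternatively, one can use the negative association of the indicator variables, or compare directly to i.i.d.\ draws and condition on their distinctness, which has probability bounded below when $n=o(2^{m/2})$. The cleanest route is Hoeffding's theorem, which needs no relation between $n$ and $2^m$ beyond $n\le 2^m$.
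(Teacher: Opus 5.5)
Your proof is correct and follows essentially the same route as the paper: a union bound over the $2^{m+1}-2$ non-constant affine functions, each giving a hypergeometric weight with mean $n/2$ whose lower tail is at most $e^{-2\mu^2 n}$, with $m=o(n)$ absorbing the $2^{m+1}$ factor. You are slightly more explicit than the paper, which states the tail bound without attribution: you name Hoeffding's sampling-without-replacement result as the source of that bound, and you dispose of the constant function directly.
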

\begin{proof}
    The function space underlying the $\mathrm{RM}_2(m,1)$ is $\mathcal{F} = \langle 1, X_1, \dots, X_m\rangle_{\F_2}$. Every non-constant function $f \in \mathcal{F}$ evaluates to $1$ on exactly $2^{m-1}$ points of $\F_2^m$. We can compute for any non-constant function $f \in \mathcal{F}$ the following probability 
    $$\Pr\left[\mathrm{wt}(f(P_1), \dots, f(P_n)) \leq  \left( \frac{1}{2} - \mu\right)n\right] = \sum_{i=0}^{\left( \frac{1}{2} - \mu\right)n}\frac{\binom{2^{m-1}}{i}\binom{2^{m-1}}{n-i}}{\binom{2^{m}}{n}} \leq e^{-2\mu^2n}.$$
    Taking a union bound over all the $2^{m+1}-2$ non-constant functions in $\mathcal{F}$ we get 
    $$\Pr\left[d\leq  \left( \frac{1}{2} - \mu\right)n\right] \leq (2^{m+1}-2)e^{-2\mu^2n}.$$
    Since $m = o(n)$ we get $(2^{m+1}-2)e^{-2\mu^2n} = 2^{-\Omega(\mu^2n)}$, therefore with probability at least $1 - 2^{-\Omega(\mu^2n)}$ the random ordered puncturing $\mathcal{C_F(P)}$ of $\mathrm{RM}_2(m,1)$ has minimum Hamming distance $$d > \left( \frac{1}{2} - \mu\right)n.$$
\end{proof}

\begin{corollary}
Let $m,n$ be positive integers satisfying $m=o(n)$ and $n = o(2^m)$. Let $\delta>0$ satisfy
$$2H_2(\delta)<\frac{1}{2}-\delta, \quad \text{equivalently} \quad \delta < \delta',$$
where $H_2$ denotes the binary entropy function and $\delta' \approx 0.0376$. Suppose
$P=(P_1,\dots,P_n)$ is chosen uniformly at random from the set of all $n$-tuples of distinct points in $\F_2^m$. Then, with probability at least $1-2^{-\Omega(n)}$, the random ordered length-$n$ puncturing $C_{\mathcal{F}}(P)$ of $\mathrm{RM}_2(m,1)$ corrects at least $\lfloor\delta n\rfloor$ insdel errors.
\end{corollary}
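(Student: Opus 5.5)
We combine Proposition~\ref{RM2mindist} with the row-exposure machinery behind Theorem~\ref{ultimate}. We do not quote Theorem~\ref{ultimate} directly, because its crude bound $2^{3n}$ is far too lossy here. The per-row failure probability is only about $1/2$, so the union bound has to be counted carefully. Set $k=m+1$, $\mathcal{X}=\F_2^m$, $\mathcal{F}=\langle 1,X_1,\dots,X_m\rangle_{\F_2}$. A nonzero $f\in\mathcal{F}$ is either the constant $1$, which has no zeros, or a nonconstant affine function with exactly $2^{m-1}$ zeros, so $Z=2^{m-1}$. Since $n=o(2^m)$, the per-row bound of Lemma~\ref{rowinaboverows} is
$$\frac{Z}{|\mathcal{X}|-n+1}=\frac{2^{m-1}}{2^m-n+1}=\tfrac12+o(1).$$
By Lemma~\ref{LCSrequirement}, correcting $\lfloor\delta n\rfloor$ insdel errors amounts to ruling out a common subsequence of length $\ell:=n-\lfloor\delta n\rfloor$. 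Since $\delta<1/4$, we have $\ell\ge 2k-1$.

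First, fix a small $\mu>0$, to be chosen at the end, and set $d_0=(\tfrac12-\mu)n$. We split
$$\Pr[\text{bad}]\le \Pr[d\le d_0]+\Pr[\text{bad}\wedge d>d_0],$$
and the first term is $2^{-\Omega(n)}$ by Proposition~\ref{RM2mindist}. On the event $d>d_0$, Lemma~\ref{notHalfsingleton} gives increasing $I,J\in[n]^\ell$ with at most $n-d_0$ agreements and $V_{k,\ell,I,J|X=\mathcal{P}}$ rank deficient.

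Next, for a fixed pair $(I,J)$, we rerun the exposure argument of Lemmas~\ref{rowinaboverows} and~\ref{Probnotfullrank} with $n-d_0$ in place of $n-d$. The only place $d$ enters is the step where a nonzero $\omega\in W^\perp$ with $L_\omega$ vanishing identically is excluded. That step uses that a nonzero codeword has fewer than $n-d+1\le n-d_0$ zeros, which holds on the event $d>d_0$.

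This is the main obstacle I foresee: the event $d>d_0$ depends on all the points, so it cannot simply be conditioned on during sequential exposure. I would handle it by bounding the intersection event directly. Whenever a newly exposed row lies in $W$ while the good event holds, some nonzero $\omega_0$ with $L_{\omega_0}\not\equiv 0$ exists, determined by the previously exposed points. The fresh point is uniform over the unused points conditioned on the past, so the product bound $(\tfrac12+o(1))^\tau$ for the joint event survives. If this turns out messy, a fallback is to note that the $\omega_0$-construction only needs each of the $\le n-d_0$ exposed agreement points plus earlier points to avoid being a common zero set of a nonzero codeword, which is again implied by the good event.

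This gives, per pair,
$$\binom{\tau+k-2}{\tau}\left(\tfrac12+o(1)\right)^\tau,\qquad \tau=\ell-k-(n-d_0)+1\ge\left(\tfrac12-\delta-\mu\right)n-m-1.$$

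Finally, we count. The number of pairs is $\binom{n}{\ell}^2\le 2^{2H_2(\delta)n}$. The extra factor is
$$\binom{\tau+m-1}{m-1}\le\left(\frac{e(\tau+m)}{m}\right)^{m}=2^{m\log(en/m)}=2^{o(n)},$$
because $m=o(n)$ forces $(m/n)\log(n/m)\to 0$. Hence
$$\Pr[\text{bad}\wedge d>d_0]\le 2^{\,2H_2(\delta)n+o(n)-\left(\frac12-\delta-\mu\right)n(1-o(1))}.$$
Since $2H_2(\delta)<\tfrac12-\delta$, we choose $\mu$ small enough that $2H_2(\delta)<\tfrac12-\delta-\mu$ strictly. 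The exponent is then $-\Omega(n)$, and together with the $2^{-\Omega(\mu^2 n)}$ term this gives the claimed success probability $1-2^{-\Omega(n)}$.
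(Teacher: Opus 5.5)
Your proposal is correct and follows the paper's route: you combine Proposition~\ref{RM2mindist} with the refined (not the crude $2^{3n}$) form of the bound behind Theorem~\ref{ultimate}. As in the paper, this uses a per-row factor $\tfrac12+o(1)$, the count $\binom{n}{\lfloor\delta n\rfloor}^2\le 2^{2H_2(\delta)n}$, and a $2^{o(n)}$ factor from the binomial in $k-2$ since $k=m+1=o(n)$; in addition, your use of the deterministic $d_0$ in place of the random $d$, and your observation that the existence of a suitable $\omega_0$ is determined by the already-exposed points and is implied by $d>d_0$, justify the step the paper simply asserts when it applies Theorem~\ref{ultimate} on the event $\mathfrak{D}$.
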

\begin{proof}
    We apply the general Theorem \ref{ultimate} with $\mathcal{X} = \F_2^m$ and the function space $\mathcal{F} = \langle 1, X_1, \dots, X_m\rangle_{\F_2}$ of affine linear functions in $m$ variables. Every nonzero $f \in \mathcal{F}$ vanishes on at most $Z = 2^{m-1}$ points of $\mathcal{X}$. By Proposition \ref{RM2mindist} we have, with probability at least $1 - 2^{-\Omega(\mu^2n)}$, that the punctured code $C_{\mathcal{F}}(P)$ has minimum distance $d > \left( \frac{1}{2} - \mu\right)n$. Call this event $\mathfrak{D}$, on this event we have $n-d < \left( \frac{1}{2} + \mu\right)n$. Taking $\ell = n - \lfloor \delta n\rfloor$, and since $k = m + 1 = o(n)$, we have
    $$\tau = \ell - k - (n-d) + 1 > \left( \frac{1}{2} - \mu - \delta - o(1)\right)n.$$
    Then, by Theorem \ref{ultimate} we have 
    $$\Pr[C_{\mathcal{F}}(\mathcal{P}) \textit{ cannot correct $\lfloor \delta n\rfloor$ insdel errors} ~\wedge~\mathfrak{D}] \leq 
    \binom{n}{\lfloor \delta n\rfloor}^2 \binom{d - \lfloor \delta n\rfloor - 1}{k-2} \left(\frac{2^{m-1}}{2^m - n + 1}\right)^\tau.$$
    The first factor has the standard bound $\binom{n}{\lfloor \delta n\rfloor}^2  \leq 2^{2H_2(\delta)n}$, while the second factor $\binom{d - \lfloor \delta n\rfloor - 1}{k-2} \leq \binom{n}{k} = 2^{o(n)}$ since $k = o(n)$. The final factor 
    $$\left(\frac{2^{m-1}}{2^m - n + 1}\right)^\tau \leq 2^{-\left( \frac{1}{2} - \mu - \delta - o(1)\right)n}$$
    by the fact that $\tau > \left( \frac{1}{2} - \mu - \delta - o(1)\right)n$ and our assumption $n = o(2^m)$.
    Therefore we have 
    $$\Pr[C_{\mathcal{F}}(\mathcal{P}) \textit{ cannot correct $\lfloor \delta n\rfloor$ insdel errors} ~\wedge~\mathfrak{D}] \leq 2^{-\left( \frac{1}{2} - \mu - \delta - 2H_2(\delta) - o(1)\right)n},$$
    which by our assumption on $\delta$ gives us 
    $$\Pr[C_{\mathcal{F}}(\mathcal{P}) \textit{ cannot correct $\lfloor \delta n\rfloor$ insdel errors} ~\wedge~\mathfrak{D}] = 2^{-\Omega(n)}.$$
    Proposition \ref{RM2mindist} tells us that $\Pr[\mathfrak{D}] \geq 1 - 2^{-\Omega(n)}$ and therefore we have 
    $$\Pr[C_{\mathcal{F}}(\mathcal{P}) \textit{ cannot correct $\lfloor \delta n\rfloor$ insdel errors}] = 2^{-\Omega(n)}.$$
    Since $\delta' \approx 0.0376$ is the unique positive solution for $2H_2(\delta) = \frac{1}{2}-\delta$, we have, with probability at least $1-2^{-\Omega(n)}$, the random ordered length-$n$ puncturing $C_{\mathcal{F}}(\mathcal{P})$ of $\mathrm{RM}_2(m,1)$ corrects at least $\lfloor\delta n\rfloor$ insdel errors for all $\delta \in (0,\delta')$. 
\end{proof}

\subsection{Reed--Solomon codes}
\begin{definition}[Reed--Solomon code]
    Let $q$ be a prime power and let $k$ and $n$ be integers satisfying $0 < k \leq n \leq q$. Let $\mathcal{F} = \lbrace f \in \F_q[X] \mid \deg(f) < k\rbrace$. Let $P = (P_1, \dots, P_{n})$ be a tuple of $n$ distinct points of $\F_q$. The $[n,k]_q$ Reed--Solomon code evaluated on $P$ is defined as
    $$C_{\mathcal{F}}(\mathcal{P}) = \mathrm{RS}_{n,k}(\mathcal{P}) = \lbrace \left(f(P_1),\dots,f(P_{n})\right) \mid f \in \mathcal{F} \rbrace.$$ 
    The code has minimum distance $d = n-k+1$.
\end{definition}

\begin{corollary} \label{RS}
    Let $\varepsilon \in (0,1)$. Let $q$ be a prime power, $q \geq 2^{\frac{4}{\varepsilon}}(k-1) + n$. Suppose $\mathcal{P} = (P_1, \dots, P_n)$ is chosen uniformly at random from the set of all $n$-tuples of distinct elements in $\F_q$. Then with probability at least $1- 2^{-n}$ the Reed--Solomon code $\mathrm{RS}_{n,k}({\mathcal{P}})$ can correct at least $(1 - \varepsilon)n - 2k + 1$ insdel errors.
\end{corollary}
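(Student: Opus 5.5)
We will obtain this directly from Theorem \ref{ultimate}, in the same way as Corollary \ref{RMq}. We take $\mathcal{X}=\F_q$ and $\mathcal{F}=\langle 1,X,\dots,X^{k-1}\rangle_{\F_q}$. A nonzero polynomial of degree less than $k$ has at most $k-1$ roots, so $Z=k-1$. Since the code is MDS, $d=n-k+1$, hence $n-d=k-1$. We may assume $k\geq 2$; for $k=1$ all codewords are constant, so distinct codewords share no common symbol and the claim is trivial.

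We choose $\ell = 2k-1+\lfloor\varepsilon n\rfloor$. If $\ell>n$, the claimed number $(1-\varepsilon)n-2k+1$ of correctable errors is negative and there is nothing to prove, so we assume $2k-1\le \ell\le n$. Then
$$\tau=\ell-k-(n-d)+1=\ell-2k+2=\lfloor\varepsilon n\rfloor+1>\varepsilon n.$$
The crude form of Theorem \ref{ultimate} gives failure probability at most
$$2^{3n}\left(\frac{k-1}{q-n+1}\right)^{\tau}.$$
The field-size hypothesis $q-n\geq 2^{4/\varepsilon}(k-1)$ yields $\frac{k-1}{q-n+1}\le 2^{-4/\varepsilon}$. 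Therefore the bound is at most $2^{3n}\cdot 2^{-4\tau/\varepsilon}\le 2^{3n-4n}=2^{-n}$, where we used $\tau\geq\varepsilon n$.

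Consequently, with probability at least $1-2^{-n}$ the code corrects $n-\ell=n-2k+1-\lfloor\varepsilon n\rfloor\ge(1-\varepsilon)n-2k+1$ insdel errors.

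The only step needing care is the exponent bookkeeping. We must check that $\tau\ge\varepsilon n$ (it is, since $\tau=\lfloor\varepsilon n\rfloor+1$) so that the factor $2^{-4\tau/\varepsilon}$ beats the $2^{3n}$ from the union bound. We must also confirm that the crude estimate $\binom{n}{\ell}^2\binom{\ell-k}{\tau}\le 2^{3n}$ holds. Both are immediate.
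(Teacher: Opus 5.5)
Your proposal is correct and matches the paper's proof: instantiate Theorem~\ref{ultimate} with $Z = n-d = k-1$, $\ell = 2k-1+\lfloor\varepsilon n\rfloor$ and $\tau = \lfloor\varepsilon n\rfloor+1$, then use the field-size hypothesis to bound the ratio by $2^{-4/\varepsilon}$, which beats the $2^{3n}$ union-bound factor. Your separate handling of the degenerate cases $k=1$ and $\ell>n$ is a harmless, slightly more careful addition that the paper omits.
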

\begin{proof}
    We apply the general Theorem \ref{ultimate} with $\mathcal{X} = \F_q$ and the function space $\mathcal{F} = \langle 1, X , \dots, X^{k-1}\rangle$ of polynomials of degree at most $k-1$. By Schwartz-Zippel, every nonzero $f \in \mathcal{F}$ vanishes on at most $Z = n-d = k-1$ points of $\mathcal{X}$.
    Taking $\ell = 2k - 1 + \lfloor \varepsilon n \rfloor$, we have 
    $$\tau = \ell - k - (n-d) + 1 = 1 + \lfloor\varepsilon n \rfloor > 0.$$
    Then, by Theorem \ref{ultimate} we have 
    $$\Pr[C_{\mathcal{F}}(\mathcal{P}) \textit{ cannot correct $n-2k + 1 - \lfloor \varepsilon n \rfloor$ insdel errors}] \leq 2^{3n}\left(\frac{k-1}{q - n + 1}\right)^{\lceil \varepsilon n \rceil}.$$
    Plugging in $q \geq 2^{\frac{4}{\varepsilon}}(k-1) + n$ we get 
    $$\Pr[C_{\mathcal{F}}(\mathcal{P}) \textit{ cannot correct $n-2k + 1 - \lfloor \varepsilon n \rfloor$ insdel errors}] \leq 2^{3n}\left(\frac{1}{2^{\frac{4}{\varepsilon}}}\right)^{\lceil \varepsilon n \rceil} \leq 2^{-n}.$$
    Therefore, with probability at least $1-2^{-n}$, the random choice of evaluation points $\mathcal{P} \in \mathcal{X}$ yields a code capable of correcting at least $n-\ell = n-2k+1-\lfloor\varepsilon n\rfloor \geq (1 - \varepsilon)n - 2k + 1$ insdel errors.
\end{proof}

\begin{remark}
    We can also arrive at Corollary \ref{RS} by realizing that it is a special case of Corollary \ref{RMq} with $m = 1$ and $r = k-1$. In this case the constraints on the rate disappear. The first hypothesis in Corollary \ref{RMq} is automatic, and the second hypothesis is satisfied by our choice of $q \geq 2^{\frac{4}{\varepsilon}}(k-1) + n$. Therefore every fixed rate $R \in \left(0, \frac{1}{2}\right)$ is admissible. 
\end{remark}
\subsection{Algebraic geometry codes}
In this section we arrive at our main result, showing that there exists a randomized family of linear codes which achieves the half-Singleton bound over constant-sized fields. The particular structured codes that we consider are from a family of AG codes defined over an asymptotically optimal tower of function fields. The preliminaries on this family of AG codes can be found in Section \ref{AGprelim}.

\begin{corollary}
    Let $\varepsilon \in (0,1)$ and $R \in (0, 1/2)$. Let $p$ be a prime power such that 
    $$p \geq 1 + \max\left\lbrace \frac{4}{R}, \frac{8}{\varepsilon}\right\rbrace \left(1 +2^{8/\varepsilon}\right),$$
    for every sufficiently large $n$ for which there exists an integer $m$ such that $$ \min\left\lbrace \frac{Rn}{4}, \frac{\varepsilon n}{8}  \right\rbrace \leq p^m \leq \min\left\lbrace \frac{Rn}{2}, \frac{\varepsilon n}{4}\right\rbrace,$$
    and set $q = p^2$. 
    Suppose $\mathcal{P} = (P_1, \dots, P_n)$ is chosen uniformly at random from the set of all $n$-tuples of distinct elements in $\mathcal{P}_q(\mathcal{Y}_m) \setminus P_\infty$, where $\mathcal{Y}_m$ is the $m^{th}$ curve in the GS tower with genus $g_m$. Let $G = (k - 1 + g_m)P_{\infty}$ be a divisor. Then, with probability at least $1-2^{-n}$ the $[n, k = Rn]$ AG code $\mathcal{C}_{\mathcal{L}(G)}(\mathcal{P})$ can correct at least $(1-\varepsilon)n - 2k + 1$ insdel errors.
\end{corollary}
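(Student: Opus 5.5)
We will apply Theorem \ref{ultimate} with $\mathcal{X} = \mathcal{P}_q(\mathcal{Y}_m)\setminus P_\infty$, so $|\mathcal{X}| = (p-1)p^m$, and with $\mathcal{F} = \mathcal{L}(G)$ for $G = (k-1+g_m)P_\infty$. Since $\mathcal{L}(G)$ consists of functions with poles only at $P_\infty$ and contains the constants, we can take $f_0 = 1$. Every nonzero $f \in \mathcal{L}(G)$ has at most $\deg(G)$ zeros, so $Z = k-1+g_m$ and $n - d \leq Z$. We also need $\deg(G) > 2g_m - 2$ so that $\dim \mathcal{L}(G) = \deg(G)+1-g_m = k$. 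This holds as soon as $k \geq g_m$, and $g_m \leq p^m \leq Rn/2 < k$ is guaranteed by the upper constraint on $p^m$ (using $g_m \le p^m$ from the genus formula). We also need $\deg(G) < n$ so that the code has dimension $k$, and this follows from $k + g_m \le Rn + Rn/2 < n$ since $R<1/2$.

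Take $\ell = 2k-1+\lfloor \varepsilon n\rfloor$. Then
$$\tau = \ell - k - (n-d) + 1 \geq k + \lfloor\varepsilon n\rfloor - (k-1+g_m) = 1+\lfloor \varepsilon n\rfloor - g_m .$$
The constraint $p^m \le \varepsilon n/4$ gives $g_m \le \varepsilon n/4$, so $\tau \geq \varepsilon n/2$ (up to the integer part). By the crude form of Theorem \ref{ultimate}, the failure probability is at most
$$2^{3n}\left(\frac{k-1+g_m}{(p-1)p^m - n + 1}\right)^{\tau}.$$
It therefore suffices to show that the ratio is at most $2^{-8/\varepsilon}$, since then $2^{3n}\cdot 2^{-(8/\varepsilon)(\varepsilon n/2)} = 2^{-n}$.

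\textbf{Main step: bounding the ratio.} The lower constraint $p^m \geq \min\{Rn/4, \varepsilon n/8\}$ gives $n \le \max\{4/R, 8/\varepsilon\}\, p^m =: c\,p^m$. The numerator is at most $k + g_m \le Rn + p^m \le (cR+1)p^m$, which is $O(c)\,p^m$. The denominator is at least $(p-1)p^m - c\,p^m$. With $p - 1 \geq c(1+2^{8/\varepsilon})$, the denominator is at least $c\,2^{8/\varepsilon}p^m$.

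The numerator needs care: $k \le Rn$, and with $n \le c\,p^m$ we get $k \le Rc\,p^m$, which may be as large as $4p^m$ when $c = 4/R$. So the ratio is at most $(Rc+1)/(c\,2^{8/\varepsilon})$. Since $Rc \le 4$ only when $c=4/R$, I expect to need the slack $\max\{\cdot\}$ and possibly to absorb a constant factor by slightly adjusting exponents; for example, use $\tau \geq \varepsilon n/2 + 1$ and the $2^{3n}$ estimate to accept a ratio of $2^{-8/\varepsilon}$ times a constant. This constant bookkeeping, fitting the exact hypothesis $p \ge 1 + c(1+2^{8/\varepsilon})$, is the main obstacle. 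Everything else is a direct instantiation.

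Finally, $n - \ell = n - 2k + 1 - \lfloor\varepsilon n\rfloor \geq (1-\varepsilon)n - 2k+1$. Because $q = p^2$ depends only on $R$ and $\varepsilon$, the field size is constant, while the existence of suitable $m$ gives infinitely many valid $n$.
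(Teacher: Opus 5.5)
Your proposal follows the paper's proof essentially step for step. You use the same $\mathcal{X}$, $\mathcal{F}=\mathcal{L}(G)$, $Z=\deg(G)$ and $\ell=2k-1+\lfloor\varepsilon n\rfloor$, and you reduce to showing the ratio is at most $2^{-8/\varepsilon}$. The argument is complete, because the ``main obstacle'' you flag at the end is not real: your own estimate already closes it.

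With $c=\max\{4/R,8/\varepsilon\}$ you showed the ratio is at most
$$\frac{(Rc+1)\,p^m}{c\,2^{8/\varepsilon}\,p^m}=\frac{R+1/c}{2^{8/\varepsilon}}.$$
Since $1/c=\min\{R/4,\varepsilon/8\}\le R/4$ and $R<1/2$, we have $R+1/c<1$. So the ratio is strictly below $2^{-8/\varepsilon}$ and there is no constant to absorb. The worry came from comparing $Rc+1$ with $1$ rather than with $c$, which cancels.

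The paper does the same thing slightly more directly. It bounds the numerator by $Rn+g_m-1<n$, using $R<1/2$ and $g_m\le\varepsilon n/2$. It then uses $n\le c\,p^m$ to get a ratio of at most $1/\bigl((p-1)/c-1\bigr)\le 2^{-8/\varepsilon}$.

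Your suggested fallback, using $\tau\ge\varepsilon n/2+1$ to absorb a constant factor, would not have worked if that constant exceeded $1$, since it is raised to the power $\tau=\Theta(n)$. The sharper $\tau\ge 3\varepsilon n/4$ that your bound $g_m\le p^m\le\varepsilon n/4$ actually gives would have left room, but none is needed.

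Your remaining checks are correct and slightly more careful than the paper's: $g_m\le p^m$ from the genus formula, $\dim\mathcal{L}(G)=k$, and $\deg(G)<n$ ensuring the code really has dimension $k$.
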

\begin{proof}
    We apply the general Theorem \ref{ultimate} with $\mathcal{X} = \mathcal{P}_q(\mathcal{Y}_m) \setminus \{P_\infty\}$ and the function space as the Riemann--Roch space $\mathcal{F} =\mathcal{L}(G)$.
    For the genus $g_m$ of $\mathcal{Y}_m$ we have the upper bound $g_m \leq 2p^m$ which along with $p^m \leq \frac{Rn}{2}$ implies $$g_m \leq Rn = k. $$
    This forces $\deg(G) = k + g_m - 1 \geq 2g_m - 1$ which consequently tells us by the Riemann--Roch theorem that the function space $\mathcal{L}(G)$ has dimension $\ell(G) = k$. 
    For any function $f \in \mathcal{F}$ we have that $\deg(G)$ bounds the number of zeros, and so we set our zero bound $Z = \deg(G)$. 
    Take $\ell = 2k - 1 + \lfloor \varepsilon n \rfloor$. The assumed inequality $p^m \leq \frac{\varepsilon n}{4}$ tells us that $g_m \leq \frac{\varepsilon n}{2}$ which forces 
    \begin{align*}
        \tau &:= \ell - k - (n-d) + 1 \\
        & \geq 2k - 1 + \lfloor \varepsilon n \rfloor - k - (k-1+g_m) + 1 \\
        & \geq  \lfloor \varepsilon n \rfloor + 1 - g_m \\
        & \geq \left\lceil\frac{\varepsilon n}{2} \right\rceil .
    \end{align*}
    Then, by Theorem \ref{ultimate} we have 
    $$\Pr[C_{\mathcal{F}}(\mathcal{P}) \textit{ cannot correct $n-2k + 1 - \lfloor \varepsilon n \rfloor$ insdel errors}] \leq 2^{3n}\left(\frac{\deg(G)}{|\mathcal{P}_q(\mathcal{Y}_m) \setminus \{P_\infty\}| - n + 1}\right)^{\frac{\varepsilon n}{2}}.$$
    Note that
    \begin{align*}
        \frac{\deg(G)}{|\mathcal{P}_q(\mathcal{Y}_m) \setminus \{P_\infty\}| - n + 1} &= \frac{Rn + g_m - 1}{p^m(p-1) - n + 1} \\
        & \leq \frac{Rn + \frac{\varepsilon n}{2}  - 1}{p^m(p-1) - n + 1} \qquad \text{(by $g_m \leq \frac{\varepsilon n}{2}$)}\\
        & \leq \frac{n}{p^m(p-1) - n + 1} \qquad \text{(by $\varepsilon \in (0,1)$ and $R \in (0, 1/2)$)}\\
        & \leq \frac{p^m \max\left\lbrace \frac{4}{R}, \frac{8}{\varepsilon}\right\rbrace}{p^m(p-1)- p^m\max\left\lbrace \frac{4}{R}, \frac{8}{\varepsilon}\right\rbrace + 1} \qquad \text{(by $p^m \geq \min\left\lbrace \frac{Rn}{4}, \frac{\varepsilon n}{8}\right\rbrace$)} \\
        & \leq \frac{1}{\min\left\lbrace \frac{R}{4}, \frac{\varepsilon}{8}\right\rbrace(p-1)-1}. 
    \end{align*}
    This simplifies the failure probability to 
    $$\Pr[C_{\mathcal{F}}(\mathcal{P}) \textit{ cannot correct $n-2k + 1 - \lfloor \varepsilon n \rfloor$ insdel errors}] \leq 2^{3n}\left(\frac{1}{\min\left\lbrace \frac{R}{4}, \frac{\varepsilon}{8}\right\rbrace(p-1)-1}\right)^{\frac{\varepsilon n}{2}}.$$
    With the consideration $p \geq 1 + \max\left\lbrace \frac{4}{R}, \frac{8}{\varepsilon}\right\rbrace \left(1 +2^{8/\varepsilon}\right)$ this gives us the desired
    $$\Pr[C_{\mathcal{F}}(\mathcal{P}) \textit{ cannot correct $n-2k + 1 - \lfloor \varepsilon n \rfloor$ insdel errors}] \leq 2^{-n}.$$
    Therefore, with probability at least $1-2^{-n}$, the random choice of evaluation points $\mathcal{P} \in \mathcal{X}$ yields a code capable of correcting at least $n-\ell = n-2k+1-\lfloor\varepsilon n\rfloor \geq (1 - \varepsilon)n - 2k + 1$ insdel errors over a field of size $q  = p^2= 2^{O(1/\varepsilon + \log(1/R))}$.
\end{proof}

\section{Conclusions and future work}
We gave a simple general framework for analyzing random evaluation codes against insdel errors. For Reed--Muller codes, the framework gives new insdel error correction guarantees for random puncturings, including a constant-fraction guarantee for first-order binary Reed--Muller codes. This complements the recent work of Qiu et al. \cite{qiu2026binary} on insdel-robust subcodes of binary first-order Reed--Muller codes, whereas our result gives a constant-fraction guarantee for random puncturings of the full first-order Reed--Muller code. For Reed--Solomon codes, the framework gives a substantially shorter proof of the result of Con et al.\cite{con2024randomRS} and improves the dependence on the additive gap $\varepsilon$ from $2^{O(1/\varepsilon^2)}$ to $2^{O(1/\varepsilon)}$. Most importantly, the same theorem applies directly to AG codes and yields randomized families of structured linear codes over constant-sized fields that approach the half-Singleton bound.

In future work, it would be interesting to look into similar problems for a fixed/restricted $q$, in the same fashion as has been done for, e.g., locally recoverable codes (LRCs) \cite{cadambe2015localbound,grezet2019alphabet,gruica2023duallocal}, and study families of codes over a fixed field size getting (close) to the half-Plotkin  bound. 

The algebraic condition required for a linear code to be uniquely decodable from insertions and deletions can easily be extended to an algebraic condition required for a linear code to be list-decodable from insertions and deletions. Relaxing the unique decoding requirement allows for linear codes capable of beating the half-Singleton bound, as demonstrated in \cite{li2026rateone}. It would be interesting to extend our framework to get randomized constructions of structured list-decodable linear codes with rate approaching $1$ over constant-sized fields. Another interesting direction would be to extend our results to generalized evaluation codes approaching the strict half-Singleton bound in the same vein as the generalization of \cite{liu2024optimal} for Reed--Solomon codes. 

Finally, it remains to find explicit constructions of linear codes and accompanying efficient decoding algorithms, which realize our existence results.  

\bibliographystyle{ieeetr}
\bibliography{main}

\end{document}